\documentclass[journal]{IEEEtran}

\usepackage{cite}
\usepackage{amsthm}
\usepackage{bm}
\usepackage{amsmath,amssymb,amsfonts}
\usepackage{mathtools}
\usepackage{algorithm,algorithmic}
\usepackage{makecell}
\usepackage{array}
\usepackage{etoolbox}
\usepackage{blindtext}
\usepackage{stfloats} 
\usepackage{scalerel,stackengine}
\DeclareMathOperator{\vect}{vec}
\usepackage{mathrsfs}
\usepackage{listings}
\usepackage{booktabs}
\usepackage[acronym,shortcuts]{glossaries}

\usepackage[caption=false,font=footnotesize]{subfig}
\usepackage[dvipsnames]{xcolor}
\newcommand{\ba}{\begin{array}}
\newcommand{\ea}{\end{array}}
\newcommand{\be}{\begin{displaymath}}
\newcommand{\ee}{\end{displaymath}}
\newcommand{\ben}{\begin{equation}}
\newcommand{\een}{\end{equation}}
\newcommand{\bena}{\begin{eqnarray}}
\newcommand{\eena}{\end{eqnarray}}
\newcommand{\beqa}{\begin{eqnarray*}}
\newcommand{\enqa}{\end{eqnarray*}}

\newcommand{\bc}{\begin{center}}
\newcommand{\ec}{\end{center}}
\newcommand{\bi}{\begin{itemize}}
\newcommand{\ei}{\end{itemize}}
\newcommand{\benu}{\begin{enumerate}}
\newcommand{\eenu}{\end{enumerate}}
\newcommand{\bdes}{\begin{description}}
\newcommand{\edes}{\end{description}}
\newcommand{\bt}{\begin{tabular}}
\newcommand{\et}{\end{tabular}}

\newcommand \zetabf{\boldsymbol{\zeta}}

\newcommand \mubf{\boldsymbol{\mu}}

\newcommand \chibf{\boldsymbol{\chi}}

\newcommand \abf{{\bf a}}

\newcommand \dbf{{\bf d}}

\newcommand \gbf{{\bf g}}

\newcommand \nbf{{\bf n}}

\newcommand \qbf{{\bf q}}

\newcommand \vbf{{\bf v}}
\newcommand \wbf{{\bf w}}
\newcommand \xbf{{\bf x}}
\newcommand \ybf{{\bf y}}

\newcommand \Abf{{\bf A}}
\newcommand \Bbf{{\bf B}}
\newcommand \Cbf{{\bf C}}
\newcommand \Dbf{{\bf D}}
\newcommand \Ebf{{\bf E}}
\newcommand \Fbf{{\bf F}}
\newcommand \Gbf{{\bf G}}
\newcommand \Hbf{{\bf H}}
\newcommand \Ibf{{\bf I}}

\newcommand \Mbf{{\bf M}}
\newcommand \Nbf{{\bf N}}

\newcommand \Qbf{{\bf Q}}
\newcommand \Rbf{{\bf R}}

\newcommand \Wbf{{\bf W}}
\newcommand \Xbf{{\bf X}}
\newcommand \Ybf{{\bf Y}}

\newcommand{\circlambda}{\mbox{$\Lambda$
             \kern-.85em\raise1.5ex
             \hbox{$\scriptstyle{\circ}$}}\,}

\newcommand{\tr}{\mathop{\rm tr}}

\newtheorem{lemma}{Lemma}%[section]
\newtheorem{remark}{Remark}%[section]

\makeatletter
\newcommand{\ALOOP}[1]{\ALC@it\algorithmicloop\ #1%
	\begin{ALC@loop}}
	\newcommand{\ENDALOOP}{\end{ALC@loop}\ALC@it\algorithmicendloop}

\makeatother



\usepackage{scalerel}
\usepackage{tikz}
\usetikzlibrary{svg.path}

\definecolor{orcidlogocol}{HTML}{A6CE39}
\tikzset{
  orcidlogo/.pic={
    \fill[orcidlogocol] svg{M256,128c0,70.7-57.3,128-128,128C57.3,256,0,198.7,0,128C0,57.3,57.3,0,128,0C198.7,0,256,57.3,256,128z};
    \fill[white] svg{M86.3,186.2H70.9V79.1h15.4v48.4V186.2z}
                 svg{M108.9,79.1h41.6c39.6,0,57,28.3,57,53.6c0,27.5-21.5,53.6-56.8,53.6h-41.8V79.1z M124.3,172.4h24.5c34.9,0,42.9-26.5,42.9-39.7c0-21.5-13.7-39.7-43.7-39.7h-23.7V172.4z}
                 svg{M88.7,56.8c0,5.5-4.5,10.1-10.1,10.1c-5.6,0-10.1-4.6-10.1-10.1c0-5.6,4.5-10.1,10.1-10.1C84.2,46.7,88.7,51.3,88.7,56.8z};
  }
}

\newcommand\orcidicon[1]{\href{https://orcid.org/#1}{\mbox{\scalerel*{
\begin{tikzpicture}[yscale=-1,transform shape]
\pic{orcidlogo};
\end{tikzpicture}
}{|}}}}
\definecolor{darkgreen}{RGB}{0,100,0}    
\definecolor{darkblue}{RGB}{0,0,139}    
\definecolor{darkred}{rgb}{0.55, 0.0, 0.0}
\usepackage[colorlinks=true,
            urlcolor=blue,
            citecolor=red,
            linkcolor=blue]{hyperref}

\makeatletter
\expandafter\let\expandafter\IEEEbiographyinternal
  \csname\string\IEEEbiography\endcsname

\patchcmd{\IEEEbiographyinternal}
  {\vskip \@IEEEBIOskipN plus 1fil minus 0\baselineskip}
  {\vskip 2\baselineskip plus 0.25\baselineskip minus 0.1\baselineskip}
  {}
  {\PackageWarning{main}{Could not patch IEEEbiography spacing}}

\expandafter\let
  \csname\string\IEEEbiography\endcsname
  \IEEEbiographyinternal
\makeatother

\newacronym{ISAC}{ISAC}{Integrated sensing and communication}
\newacronym{6G}{6G}{sixth-generation}
\newacronym{NOMA}{NOMA}{Non-Orthogonal Multiple Access}
\newacronym{BS}{BS}{base station}
\newacronym{Eve}{Eve}{eavesdropper}
\newacronym{AO}{AO}{alternating optimization}
\newacronym{SDR}{SDR}{semidefinite relaxation}
\newacronym{SCA}{SCA}{successive convex approximation}
\newacronym{PD-NOMA}{PD-NOMA}{power-domain non-orthogonal multiple access}
\newacronym{SIC}{SIC}{interference cancellation}
\newacronym{CSI}{CSI}{channel state information}
\newacronym{SINR}{SINR}{signal-to-interference-plus-noise ratio}
\newacronym{TDMA}{TDMA}{time-division-multiple-access}
\newacronym{UAV}{UAV}{uncrewed aerial vehicle}
\newacronym{CRB}{CRB}{Cramér–Rao bound}
\newacronym{PLS}{PLS}{Physical layer security}
\newacronym{MSE}{MSE}{mean squared error}
\newacronym{V2X}{V2X}{vehicle-to-everything}
\newacronym{RIS}{RIS}{reconfigurable intelligent surface}
\newacronym{RCS}{RCS}{radar cross section}
\newacronym{AWGN}{AWGN}{additive white Gaussian noise}
\newacronym{FIM}{FIM}{Fisher Information Matrix}
\newacronym{ULAs}{ULAs}{uniform linear arrays}
\newacronym{AoA}{AoA}{angle-of-arrival}
\newacronym{AoD}{AoD}{angle-of-departure}
\newacronym{OP}{OP}{optimization problem}
\newacronym{CPI}{CPI}{coherent processing interval}
\newacronym{CS}{C\&S}{communication and sensing}
\newacronym{MF}{MF}{matched-filter}

\begin{document}
\title{Robust Beamforming Design for Secure Uplink NOMA-ISAC}

\author{
Azadeh Tabeshnezhad\textsuperscript{\orcidicon{0000-0003-0140-0076}},~\IEEEmembership{Member,~IEEE}, 
Milad Tatar Mamaghani\textsuperscript{\orcidicon{0000-0002-3953-7230}},
A. Lee Swindlehurst\textsuperscript{\orcidicon{0000-0002-0521-3107}},~\IEEEmembership{Life Fellow,~IEEE},
Tommy Svensson\textsuperscript{\orcidicon{0000-0002-2579-9002}},~\IEEEmembership{Senior Member,~IEEE}, and
Erik Ström\textsuperscript{\orcidicon{0000-0002-3084-7232}},~\IEEEmembership{Fellow,~IEEE}
\thanks{This work was supported in part by the ROBUST-6G project through the Smart Networks and Services Joint Undertaking (SNS JU) under the European Union's Horizon Europe research and innovation programme, Grant Agreement No. 101139068, and in part by the U.S. National Science Foundation under grant CCF-2322191.}
\thanks{Azadeh Tabeshnezhad, Tommy Svensson, and Erik Ström are with the Department of Electrical Engineering, Chalmers University of Technology, 412 96 Gothenburg, Sweden (e-mail: azadeh.tabeshnezhad@chalmers.se; tommy.svensson@chalmers.se; erik.strom@chalmers.se).}
\thanks{M. Tatar Mamaghani is an independent researcher and consultant, formerly with the Australian National University, Canberra, ACT 2601, Australia (e-mail: milad.tatarmamaghani@gmail.com).}
\thanks{A. Lee Swindlehurst is with the Nh\~{u}' Department of Electrical Engineering \& Computer Science, University of California, Irvine, CA 92697, USA (email: \href{mailto:swindle@uci.edu}{\textcolor{black}{swindle@uci.edu}}).}
}

\maketitle

\begin{abstract}
Integrated sensing and communication is an important technology for sixth-generation (6G) mobile networks, enabling the joint use of communication and radar sensing within a unified system. While offering significant benefits in terms of spectral efficiency, ISAC introduces new security challenges. In particular, the joint use of resources for sensing and communication can increase vulnerability to eavesdropping and information leakage. In this paper, we study an uplink Non-Orthogonal Multiple Access (NOMA) system where the base station (BS) simultaneously receives user data and senses a potential eavesdropper (Eve) with uncertain location. To enhance the physical-layer security, a robust sensing signal is designed to both sense and jam Eve. We formulate a joint optimization problem that aims to maximize the users' sum rate and the BS sensing performance while maintaining security against Eve. Since the resulting optimization problem is non-convex, we develop an iterative alternating optimization (AO) algorithm that decomposes it into two tractable subproblems. In the first subproblem, the receive combining vectors are optimized in closed form using generalized eigenvalue decomposition. In the second subproblem, the transmit beamforming matrices and sensing power are jointly optimized via semidefinite relaxation (SDR) and successive convex approximation (SCA). Simulation results demonstrate the effectiveness of our solution in terms of fast convergence and resource allocation.
\end{abstract}

% Note that keywords are not normally used for peer review papers.
\begin{IEEEkeywords}
Secure integrated sensing and communication, Uplink Non-orthogonal multiple access, Cramér-Rao Bound, Eavesdropper, Beamforming 

\end{IEEEkeywords}

\IEEEpeerreviewmaketitle

%\listoftodos
\section{Introduction}

\IEEEPARstart{T}{he} next generation of mobile communication networks is expected to support a diverse range of mission-critical applications, including autonomous systems, industrial automation, and connected smart societal infrastructure. These emerging services demand not only ultra-reliable and low-latency communications but also high-precision environmental awareness and stringent digital security. \ac{ISAC} has emerged as a promising technology for \ac{6G} networks by enabling the joint operation of radar sensing and data communication through the use of shared spectrum, hardware resources ~\cite{EJ2021Zhang, SFL2022Liu} and via unified waveform design \cite{MRC2020Mert}. Thus, ISAC has the potential to reduce hardware cost, improve spectral and energy efficiency, and facilitate real-time situational awareness in 6G mobile networks. However, ISAC also introduces new design challenges, particularly in managing the interference between communication and sensing signals, and safeguarding such systems against malicious actors.

Concurrently, \ac{PD-NOMA} has emerged as a compelling solution for multi-user access. NOMA improves spectral efficiency through superposition coding at the transmitter and successive \ac{SIC} at the receiver~\cite{RAI2023Azadeh, JMAR2024Azadeh}, distinguishing multiple users by their signal power and thus enabling them to simultaneously occupy the same time, frequency, and spatial resources. NOMA also opens new opportunities for interference-resilient and resource-constrained ISAC system design by providing new degrees of freedom for designing robust and secure waveforms against eavesdroppers in shared-spectrum environments.

\subsection{Related works}

%% NOMA ISAC-related works with no security aspects
Recent advances in ISAC systems have leveraged NOMA to improve spectral efficiency and manage interference. For instance, \cite{OSN2024Sun} optimizes resource sharing using outage and \ac{MSE} metrics for a NOMA-ISAC system, while \cite{NEI2022Wang} proposes beamforming designs to maximize a joint communication-sensing utility.
Other efforts, such as \cite{ISC2023Amhaz} and \cite{SEN2023Dou}, extend NOMA-ISAC research by jointly designing transmission duration, beamforming, and power control to enhance sensing efficiency and user rates in terrestrial and satellite scenarios. Despite their promising developments, these studies focus on downlink configurations, assume ideal \ac{CSI}, and do not address performance robustness under location uncertainty. The authors of \cite{NII2022Wang} investigate a NOMA-inspired ISAC system with multiple users and targets, in which a dual-functional BS jointly optimizes communication and sensing beams while enforcing rate and \ac{SIC} constraints. 

Furthermore, to improve sensing coverage and beam pattern gain, \cite{NAF2024Xue} employs \ac{RIS}. The sensing \ac{SINR} for a full-duplex NOMA-ISAC system is optimized in \cite{BOF2022Chen}. Other research, such as \cite {SIS2023Zhang} and \cite{PNS2023Akhtar}, divides the spectrum between dedicated and joint sensing-communication modes, offering flexibility under certain conditions. While the aforementioned studies improve NOMA-ISAC performance through novel architectures and designs, they often neglect the security challenges of ISAC systems, particularly from the physical-layer perspective.

% 3) NOMA-ISAC works with Secrecy 
\ac{PLS}, which utilizes local channel state information and signaling to protect communications, has recently attracted significant interest in the context of ISAC system design \cite{SPO2022Yang, SPS2024Huang, SIS2025Milad}. Yang \emph{et al.} in \cite{SPO2022Yang} present a secure NOMA-enabled multiuser ISAC framework in which joint precoding and artificial jamming are optimized via \ac{SCA} to maximize the secrecy rate while preserving sensing performance. The authors of \cite{SPS2024Huang} design secure precoding schemes for satellite NOMA–ISAC systems under both perfect and imperfect eavesdropper CSI to enhance secrecy while preserving sensing performance. Their results show that the proposed designs significantly enhance secrecy performance compared to \ac{TDMA} transmission while maintaining sensing reliability.  Mamaghani \emph{et al.} in \cite{SIS2025Milad} model the interaction between an ISAC \ac{BS} and a mobile malicious \ac{UAV} as a Stackelberg game, combining SCA-based resource optimization with deep reinforcement learning for adaptive adversarial trajectory control, achieving improved secrecy, sensing accuracy, and power efficiency, while maintaining robust performance against a proactive mobile threat.

Despite these advances, prior security-focused ISAC frameworks predominantly assume perfect eavesdropping CSI and do not model how sensing-parameter estimation errors propagate through the ISAC processing. Although estimation-theoretic metrics such as the \ac{CRB} have been incorporated into ISAC design, e.g., CRB minimization for NOMA beamforming \cite{JUP2023Dou} and performance characterization of uplink ISAC with spectrum sharing \cite{OPU2022Ouyang}, these approaches utilize the CRB primarily as an isolated sensing metric. They do not integrate CRB-induced uncertainty into robust beamforming design, nor do they analyze how estimation inaccuracies affect communication reliability or secrecy. The fundamental coupling between CRB-driven sensing errors and system-level performance metrics such as SINR, data rate, and secrecy capacity, therefore, remains insufficiently understood, motivating the current research.

\subsection{Our Contributions}
We consider a novel scenario in which an ISAC system employing uplink \ac{PD-NOMA} must also provide security against an eavesdropping target with an imprecisely known location. Unlike prior work that considers imperfect CSI or sensing-parameter estimation, we incorporate knowledge of the estimation-theoretic uncertainty into the beamforming design. In particular, we introduce a CRB-informed design framework to capture the impact of parameter estimation errors on communication, sensing, and secrecy performance. Our detailed contributions are as follows:

\begin{itemize}
    \item We develop an uncertainty-aware model for an uplink NOMA-ISAC system with an eavesdropping target, and characterize how sensing-estimation errors for the target's angle and delay parameters propagate and affect the sensing and system security. We then derive tractable SINR approximations for the communication, sensing, and eavesdropping performance metrics.
    \item We formulate a joint beamforming and sensing-power optimization problem that maximizes a weighted communication--sensing utility while enforcing security and power budget constraints. To tackle the problem, we propose an alternating optimization algorithm that admits a computationally efficient iterative solution, and we provide a convergence and complexity analysis.
    \item We demonstrate via simulations that the proposed robust-secure design keeps the eavesdropper SINR below a prescribed threshold under location uncertainty while maintaining adequate sensing and communication performance relative to ideal and non-secure baselines.
\end{itemize}

\subsection{Organization and Notation}
The rest of the paper is organized as follows. Section~\ref{sec:SystemModel} introduces the uplink PD-NOMA ISAC system model and the communication, sensing, and eavesdropping performance metrics. Section~\ref{sec:CRB} develops a CRB-based uncertainty analysis and derives the resulting uncertainty-aware SINR expressions for communications and sensing. Section~\ref{sec:Joint_Opt} formulates the joint optimization problem and presents the proposed AO-SCA solution together with convergence and complexity discussions. Section~\ref{sec:simulations} provides numerical results, followed by conclusions drawn in Section~\ref{sec:Conclusion}.

\textit{Notation}: The following notation is used throughout the paper: Lowercase letters denote a scalar, lowercase boldface letters denote a vector, and uppercase boldface letters denote a matrix. The element of matrix $\mathbf{H}$ in the $i$-th row and $j$-th column is denoted as $(\mathbf{H})_{i,j}$. The operators $(\cdot)^\top$, $(\cdot)^*$, and $(\cdot)^H$ denote transpose, complex conjugation, and Hermitian transpose, respectively. The Euclidean norm of a vector $\mathbf{x}$ is given by $\|\mathbf{x}\|$, $\Re\{.\}$ denotes the real part of a complex quantity, and $\tr(\cdot)$ the matrix trace. The notation $[\cdot]^+$ denotes the positive-part operator. The notation $\mathbf{I}_N$ represents the $N\times N$ identity matrix, and $\mathbf{1}_N$ denotes an $N\times 1$ vector of ones. The space of $p\times q$ complex-valued matrices is written as $\mathbb{C}^{p\times q}$. A circularly symmetric complex Gaussian multivariate random distribution with mean vector $\pmb{\mu}$ and covariance matrix  $\mathbf{C}$ is denoted as $\mathcal{CN}(\pmb{\mu}, \mathbf{C})$.

\section{System Model}\label{sec:SystemModel}
We consider an ISAC system employing uplink PD-NOMA, as shown in  Fig.~\ref{fig1}. A monostatic radar BS with well-separated transmit and receive arrays of $N_t$ and $N_r$ antennas, respectively, transmits a signal to simultaneously sense and jam an eavesdropping target (Eve), and receives uplink signals from two multi-antenna users (UEs) as well as the target echo signal and radar self-interference. We assume that the transmit and receive arrays have the same number of antennas, i.e., $(N_t=N_r=N)$. The BS uses the target echo to estimate Eve's location, albeit with some estimation error. Each UE has $M$ antennas, while Eve is assumed to have a single antenna. Eve attempts to overhear the uplink transmissions in the presence of the BS radar interference, while the UEs employ beamforming to steer their signals away from Eve using the imprecise location information derived at the BS. We assume that the BS, UEs, and Eve are static\footnote{This assumption can be relaxed assuming the UEs and Eve are slowly moving compared to the involved time scales in the estimation and communications processes.} and located at coordinates $\dbf_b = [0,0,0]^\top$, $\dbf_k = [x_k,y_k,0]^\top,~k = \{1,2\}$, and $\dbf_e = [x_e, y_e, z_e]^\top$, respectively.

\begin{figure}
    \centering
    \includegraphics[width=\linewidth]{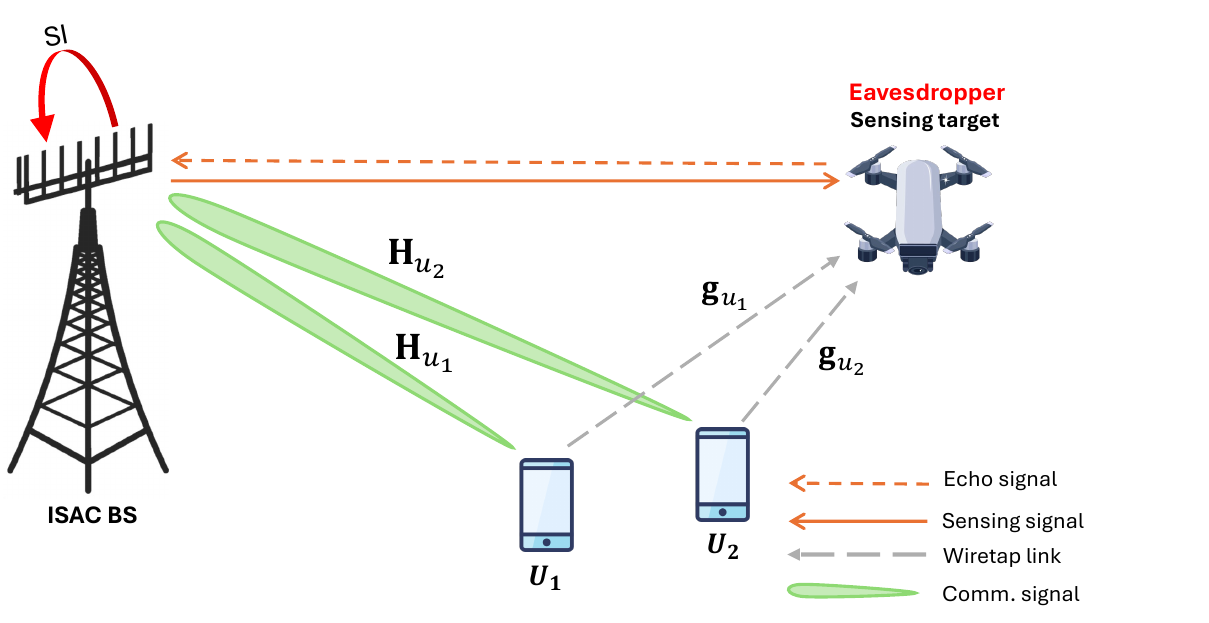}
    \caption{Illustration of a monostatic ISAC system employing PD-NOMA in the uplink and sensing an eavesdropping target in the downlink.}
    \label{fig1}
\end{figure}

\subsection{Signal Models}
At time $t$, the BS transmits the sensing signal
\begin{align}
  \xbf(t) & = \sqrt{p_s}\abf(\hat\theta_b)\sum_{m=0}^{L-1} a_m \phi\!\left(t-m T\right) \quad \in \mathbb{C}^{N_t\times1}, \label{eu_radar:sig}
\end{align}
where $p_s$ is the transmit sensing power, $\abf(\hat\theta_b)$ is the adopted array steering vector at the BS based on estimated angle $\hat\theta_b$ towards Eve, $a_m$ is the $m$-th symbol in the transmitted symbol sequence, $\phi(t)$ is a unit-energy complex baseband waveform, i.e., $\int_{-\infty}^{\infty} \big|\phi(t)\big|^2 \,dt=1$, $L$ is the number of symbol pulses, and $T$ is the symbol duration. Simultaneously, the BS receives $\ybf_\text{b}(t) \in \mathbb{C}^{N_r\times1}$ composed of the radar echo from the target Eve, the uplink signals from the UEs, self-interference due to $\xbf(t)$, and noise:
\begin{equation} \label{eu_yb}
\ybf_\text{b}(t) = \underbrace{\ybf_{\text{c}}(t)}_{\text{Communication}} \mathrel{+} 
\underbrace{\ybf_{\text{s}}(t)}_{\text{Sensing}} \mathrel{+} 
\underbrace{\ybf_{\text{\tiny SI}}(t)}_{\text{Self-interference}}\mathrel{+}
\underbrace{\nbf_\text{b}(t)}_\text{noise}.
\end{equation} 
The self-interference (SI) signal is given by
\[{\ybf_{\text{\tiny SI}}(t)} = \Hbf_{\text{\tiny SI}}\xbf(t), \] 
where the elements of $\Hbf_\text{\tiny SI} \in \mathbb{C}^{N_r\times N_t}$ are given by \cite{FDM2024SGP}
\begin{align} \label{eu_SI}
(\Hbf_{\text{SI}})_{i',j'} = \frac{\rho}{d_{i',j'}} \exp\left( -2\pi j \frac{d_{i',j'}}{\lambda} \right),
\end{align}
where $0\leq \rho\leq 1$ indicates the amount of SI cancellation prior to baseband processing ($\rho=0$ indicates perfect cancellation), and $d_{i',j'}$ is the distance between the $i'$-th receive, and $j'$-th transmit antennas at the BS. The noise term in \eqref{eu_yb} is modeled as \ac{AWGN}: $\nbf_\text{b}(t)\sim \mathcal{CN} (\mathbf{0},\mathbf{\sigma}^2_b \Ibf)$, where $\Ibf$ is the $N_r\times N_r$ identity matrix.

In the following, we provide expressions for the communication and sensing components in \eqref{eu_yb}, often dropping the time argument $t$ for brevity.

\subsubsection{\texorpdfstring{Communication signal $\ybf_c$}{Communication signal yc}}
The noise-free uplink communication signal $\ybf_c$ is given by
\begin{align} \label{eu_yc}
    \ybf_{\text{c}}(t) = \sum_{k=1}^2 \Hbf_{u_k} \tilde \wbf_k s_k (t),
\end{align}
where $\Hbf_{u_k} \in \mathbb{C}^{N_r\times M}$ denotes the Rayleigh fading channel between the UE $k$ and BS, $s_{k}(t)$ is the unit-power baseband signal from UE $k$, and $\tilde \wbf_k = \sqrt{p_k}\wbf_k$, where $\wbf_k \in \mathbb{C}^{M \times 1}$ is the UE transmit beamformer with power $p_k$ assuming $\|\wbf_k\|=1$.
 
\subsubsection{\texorpdfstring{Sensing signal $\ybf_s$}{Sensing signal ys}}
We assume that the sensing channel is a line-of-sight (LoS) link, e.g., as for a higher-altitude UAV \cite{MiladSSPC2024}. Hence, the noise-free sensing echo signal $\ybf_s$ as received at the BS can be expressed as
\begin{align} \label{eu_ys}
   \ybf_\text{s}(t) = \beta_0 \Abf(\theta_b) \xbf (t-2\tau_b),
   \end{align}  
where $\beta_0$ denotes the \ac{RCS} of the target, and $2\tau_b$ denotes the round-trip delay, where $\tau_b = \|\dbf_b - \dbf_e\|/c$ is the one-way delay and $c$ is the speed of light.
The sensing channel matrix $\Abf(\theta_b)$ is  expressed as
\begin{align}\label{eu_A(theta)}
   \Abf(\theta_b) = \alpha_2(\tau_b) \abf_r(\theta_b) \abf_t(\theta_b)^H,
\end{align}
where $\alpha_2(\tau_b)$ indicates the round-trip large-scale path gain between the BS and Eve, given by \cite{EWRS2013}
\begin{align}
   \vert \alpha_2(\tau_b) \vert^2 = \frac{\lambda^2}{(4\pi)^3( c\tau_b)^4},
\end{align}
where $\lambda = c/f_c$ is the wavelength and $f_c$ is the carrier frequency. The vectors $\abf_r(\theta_b)$ and $\abf_t(\theta_b)$ respectively represent the responses of the receive and transmit arrays, which we assume are both \ac{ULAs} with identical orientations and a sub-array separation that is large enough to avoid excessive self-interference, but small enough such that the \ac{AoA} and \ac{AoD} are both given by $\theta_b$.
Thus, 
\begin{equation} \label{eu_Steering}
    \abf_i(\theta_b) = [ 1, e^{j \frac{2\pi \delta}{\lambda}{\sin \theta_b}}, \dots, e^{j \frac{2\pi \delta}{\lambda}{(N_i-1) \sin \theta_b}} ]^T,  
  \end{equation}  
where $\delta$ is the antenna spacing and $i\in\{r,t\}$.

\subsubsection{\texorpdfstring{Signal at Eve}{Signal at Eve}}
Eve receives a superposition of the UE uplink and the radar transmissions, which, under the assumption of LoS channels, can be expressed as
\begin{align} \label{eu_ye}
 y_e(t) &= \sum_{k=1}^2  \alpha_1(\tau_{u_k}) \abf(\theta_{u_k})^H \tilde{\wbf}_k s_k (t) \nonumber\\
 & \qquad + \alpha_1(\tau_b) \abf_t(\theta_b)^H \xbf(t- \tau_b) + n_e(t),
\end{align}
where $\theta_{u_k}$ is the \ac{AoD} of the signal from UE$_k$ to Eve, $\abf(\theta_{u_k}),~ k\in\{1,2\}$, are defined similarly as \eqref{eu_Steering}, and $n_e\sim \mathcal{CN}(0,\sigma^2_e)$.
\begin{align}
    \vert \alpha_1(\tau_{u_k}) \vert^2 &=  \left(\frac{\lambda}{4\pi c \tau_{u_k}}\right)^2. \label{eu:alpha_u} \\
    \vert \alpha_1(\tau_b) \vert^2 &=  \left(\frac{\lambda}{4\pi c \tau_b}\right)^2, \label{eu:alpha_b}
\end{align}
are respectively the one-way large-scale path gains between Eve and UE $k$ and Eve, and the BS, with
$\tau_{u_k} = \|\dbf_{u_k} - \dbf_e\|/c$.

\subsection{Communication Performance}
We adopt the \ac{SINR} as the metric to evaluate communication performance in the considered PD-NOMA ISAC system. We assume that the BS has perfect CSI for both UEs, and that the first user (UE$_1$) is closer to the BS, yielding a favorable channel gain, i.e., \(\|\mathbf{H}_{u_1}\| \geq \|\mathbf{H}_{u_2}\|\). Following the NOMA decoding principle, the BS first decodes the communication signal from UE$_1$, treating UE$_2$’s signal as the main source of interference. Subsequently, the BS subtracts UE$_1$'s signal from $\ybf_b(t)$ via SIC, and then proceeds to decode UE$_2$'s signal. 

To this end, we assume that the BS employs receive combiners $\mathbf{v}_1, \mathbf{v}_2\in \mathbb{C}^{N_r\times 1}$ to recover the corresponding signals for UE$_1$ and UE$_2$, respectively. Accordingly, the SINR for decoding UE$_1$'s signal, treating the SI and UE$_2$'s signal as additional interference, and ignoring the weak received reflected radar echo from target Eve, is given by
\begin{align} \label{eu_User1}
   \gamma_1 =  \frac{\vert \vbf^H_1 \Hbf_{u_1} \tilde \wbf_1 \vert^2}{ \vert \vbf^H_1 \Hbf_{u_2} \tilde \wbf_2 \vert^2 +\mathbb{E}\{\vert \vbf_1^H \Hbf_\text{\tiny SI}\xbf \vert^2\}+\sigma^2_b},
\end{align}
where the expectation is taken over the sensing signal $\xbf$. Assuming that the SI can be sufficiently suppressed, the BS cancels UE$_1$'s signal without any error propagation. Thus, the SINR to
decode UE$_2$’s transmission becomes

\begin{align} \label{eu_User2}
\gamma_2 = \frac{\vert \vbf^H_2 \Hbf_{u_2} \tilde \wbf_2 \vert^2}{\mathbb{E}\{\vert\vbf_2^H \Hbf_\text{\tiny SI}\xbf \vert^2\}+\sigma^2_b}.
\end{align}

\begin{remark}
   ISAC systems exhibit degraded performance compared to their communication-only or sensing-only counterparts; thus, the mutual interference between the radar and communication signals should be accounted for in the SINR evaluations. Nevertheless, the sensing echo signal is typically weak compared to the communication component, and hence the echo term $\mathbb{E}\{\vert \beta_0 \vbf_k^H \Abf(\theta_b) \xbf \vert^2\}$ is ignored in the SINR expressions of \eqref{eu_User1} and \eqref{eu_User2}.
\end{remark}

Using the definition in~\eqref{eu_radar:sig} and assuming $\bigl\|\Hbf_{\text{\tiny SI}}\abf{(\hat\theta_b)}\bigr\| \approx \bigl\|\Hbf_{\text{\tiny SI}}\abf(\theta_b)\bigr\|$, the SINRs in \eqref{eu_User1} and \eqref{eu_User2} can be approximated respectively as
\begin{align} \label{eu_SINR_user1_xN}
    \boxed{\gamma_1 \approx  \frac{\vert \vbf^H_1 \Hbf_{u_1} \tilde \wbf_1 \vert^2}
      {\vert \vbf^H_1 \Hbf_{u_2} \tilde \wbf_2 \vert^2 +  p_s (\vbf_1^H \Qbf_{\text{\tiny SI}_\text{u}}\vbf_1 ) +\sigma^2_b}.}
\end{align}
\begin{align} \label{eu_SINR_user2_xN}
    \boxed{\gamma_2 \approx \frac{\vert \vbf^H_2 \Hbf_{u_2} \tilde \wbf_2 \vert^2}{p_s( \vbf_2^H \Qbf_{\text{\tiny SI}_\text{u}}\vbf_2 )  +\sigma^2_b},}
\end{align}
where $\Qbf_{\text{\tiny SI}_\text{u}}\triangleq \Hbf_\text{\tiny SI} \abf(\theta_b) \abf(\theta_b)^H \Hbf_\text{\tiny SI}^H$. 

\subsection{Sensing performance}
Assuming the UE messages are decoded at the BS and removed from $\ybf_\text{b}(t)$ in \eqref{eu_yb}, the signal used for sensing becomes
\begin{align} \label{eu:y_bs}
 \ybf_\text{b,s}(t) = \ybf_\text{s}(t) + \ybf_\text{SI}(t) + \nbf_\text{b}(t).
\end{align}
A matched filter (MF) is applied over the $L$ transmitted pulses to reduce the effects of noise and SI, resulting in
\begin{align}  \label{eu:tildey_bs}
 \tilde\ybf_\text{b,s} &= \int_0^{LT} \ybf_\text{b,s}(t) q^*(t-2\hat{\tau}_b) \,dt,
\end{align}
where ${\tau}_b \triangleq   \hat{{\tau}}_b +\Delta\tau_b$, $\Delta\tau_b$ is the delay estimation error, and
\[
q(t)=\sum_{m=0}^{L-1} a_m \phi\!\left(t-m T\right),\quad \text{with}\quad  \int |q(t)|^2 \,dt = 1,
\]
The following lemma approximates $\tilde\ybf_\text{b,s}$ in an analytical form.

\begin{lemma}\label{lemma1}
Assuming $\Delta\tau_b \ll T$, \eqref{eu:tildey_bs} is approximately

\begin{align}
    \tilde\ybf_\text{b,s} \approx  \tilde\ybf_\text{s} + \tilde \ybf_\text{\tiny SI} + \tilde\nbf_\text{b},
\end{align}
where
\begin{align}
    \tilde{\ybf}_\text{s} &=  \sqrt{p_s} \beta_0 L \Abf(\theta_b)\,\abf(\hat\theta_b)  \Big( 1 + R''_\phi(0) 2\Delta\tau_b^2\Big) \\
     \tilde\ybf_\text{\tiny SI} &= \sqrt{p_s}  \, \Hbf_\text{\tiny SI} \abf(\hat\theta_b) c(\ell'(\tau_b)),
\end{align}
$R''_\phi(0)$ is the 2nd derivative of the autocorrelation function of the pulse $\phi(t)$ evaluated at lag 0, $\ell'(\tau_b) \triangleq \text{round} (2\tau_b/T)$ is the circular pulse train offset corresponding to the round-trip delay, and $c(n)$ is the autocorrelation of the pulse modulation sequence $\{a_\ell\}$ evaluated at lag $n$. The filtered noise is distributed as $\tilde\nbf_\text{b} \sim {\cal{CN}}(0,\sigma^2_b\Ibf)$.
\end{lemma}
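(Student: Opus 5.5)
By linearity of the matched filter in \eqref{eu:tildey_bs}, the output splits into three terms, one for each component of \eqref{eu:y_bs}. I would treat them separately, with $\xbf(t)=\sqrt{p_s}\abf(\hat\theta_b)q(t)$ throughout, using the paper's identification $\abf(\cdot)=\abf_t(\cdot)$.

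\emph{Echo term.} Substituting \eqref{eu_ys} gives
\[
\tilde\ybf_\text{s}=\sqrt{p_s}\beta_0\Abf(\theta_b)\abf(\hat\theta_b)\int q(t-2\tau_b)q^*(t-2\hat\tau_b)\,dt .
\]
I would write this integral as the autocorrelation $R_q(2\Delta\tau_b)$ of the pulse train and expand it as
\[
R_q(2\Delta\tau_b)=\sum_{m,m'}a_m a_{m'}^*R_\phi\big(2\Delta\tau_b+(m'-m)T\big).
\]
Since $\Delta\tau_b\ll T$, I would keep only the diagonal terms $m=m'$. This is justified if $\phi$ is (near) orthogonal to its $T$-shifts, e.g.\ a Nyquist pulse, so that the cross terms vanish to first order. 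With unit-modulus symbols the diagonal sum gives $L\,R_\phi(2\Delta\tau_b)$. I would then Taylor-expand $R_\phi$ about $0$. Because $R_\phi$ is Hermitian with its maximum $R_\phi(0)=1$ at zero lag, the linear term drops for a real or symmetric pulse. This leaves $R_\phi(2\Delta\tau_b)\approx 1+\tfrac12R''_\phi(0)(2\Delta\tau_b)^2=1+2R''_\phi(0)\Delta\tau_b^2$, which is the stated factor.

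The normalization is the main obstacle. The statement asserts $\int|q|^2=1$, yet the factor $L$ appears, which requires per-pulse unit energy with $|a_m|=1$. I would adopt the convention that matches the claimed expression: the sum $\sum_m|a_m|^2=L$ produces the gain $L$. I would also flag this inconsistency in the normalization of $q$.

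\emph{SI term.} The self-interference is undelayed, so
\[
\tilde\ybf_\text{\tiny SI}=\sqrt{p_s}\Hbf_\text{\tiny SI}\abf(\hat\theta_b)\int q(t)q^*(t-2\hat\tau_b)\,dt .
\]
Expanding $q$ again, the integral becomes $\sum_{m,m'}a_ma_{m'}^*R_\phi\big((m-m')T-2\hat\tau_b\big)$. I would approximate $2\hat\tau_b\approx 2\tau_b$ by rounding to the nearest multiple of $T$, namely $\ell'(\tau_b)T$. Invoking Nyquist orthogonality of $\phi$ then leaves only the pairs with $m'-m=\ell'$. If the integration over $[0,LT]$ is taken as circular, i.e.\ with a periodic or cyclically extended sequence, the surviving sum is exactly the circular autocorrelation $c(\ell'(\tau_b))$, matching the stated $\tilde\ybf_\text{\tiny SI}$.

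\emph{Noise term.} $\tilde\nbf_\text{b}=\int\nbf_\text{b}(t)q^*(t-2\hat\tau_b)\,dt$ is a linear functional of white circular Gaussian noise, so it is zero-mean circular Gaussian. Its covariance is $\sigma_b^2\int|q|^2\,\Ibf=\sigma_b^2\Ibf$, using the unit-energy normalization of $q$ stated in the lemma. Summing the three terms gives the claimed form of $\tilde\ybf_\text{b,s}$.
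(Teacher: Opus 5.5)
Your proposal follows the paper's own proof essentially step for step: linearity of the matched filter, reduction of the echo integral to $\sum_m |a_m|^2 R_\phi(2\Delta\tau_b) = L R_\phi(2\Delta\tau_b)$ followed by a second-order Taylor expansion with vanishing linear term, the SI output as the pulse-train correlation $r_q(2\hat\tau_b)$ at the round-trip offset, and the noise covariance $\sigma_b^2 E_q \Ibf$ with $E_q = 1$.

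The normalization tension you flag is genuinely present in the paper's argument, which uses $\sum_m |a_m|^2 = L$ for the echo but $\int |q|^2\,dt = 1$ for the noise. Your extra hypotheses are exactly what the paper leaves implicit: orthogonal pulse shifts, a real autocorrelation so that $R'_\phi(0)=0$, and a cyclic sequence with $2\hat\tau_b \approx \ell'(\tau_b) T$ so that the paper's double sum collapses to $c(\ell'(\tau_b))$. One caveat: for the stated $\Delta\tau_b^2$ coefficient to be exact you want non-overlapping, time-limited pulses rather than merely Nyquist ones, since Nyquist cross terms are generally first order in $\Delta\tau_b$.
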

\begin{proof}
    See Appendix \ref{AppendixA}.
\end{proof}

The post-MF SINR is given by
\begin{align}\label{gammas}
    \gamma_s = \frac{\mathbb{E}\{ \vert \vbf_s^H \tilde{\ybf}_\text{s} \vert^2\}}{\mathbb{E}\{ \vert \vbf_s^H \tilde\ybf_{\text{\tiny SI}} \vert^2 \} + \sigma_{b}^2 },
    \end{align}
where $\vbf_s\in\mathbb{C}^{N_r}\times 1$ is the sensing combiner.

Lemma \ref{lemma2} provides a closed-form approximation of the sensing SINR.

\begin{lemma}\label{lemma2}
    The received sensing SINR at the BS can be approximated as
\begin{align} \label{SINR_sensing_final}
    \boxed{\gamma_s \approx \frac {p_s \vert \beta_0 \vert^2 L^2  (\vbf_s^H \Qbf_s\vbf_s)}{p_s |c(\ell'(\tau_b))|^2(\vbf_s^H \Qbf_\text{\tiny SI}\vbf_s)+\sigma^2_b},}
\end{align}
where
\begin{align}
\Qbf_s &\triangleq \Abf(\theta_b)\abf(\theta_b)\abf(\theta_b)^H\Abf(\theta_b)^H \nonumber\\
&\quad + \sigma_\tau^2 4 R_\phi''(0)\, \Abf(\theta_b)\abf(\theta_b)\abf(\theta_b)^H\Abf(\theta_b)^H \nonumber\\
&\quad + \sigma_\theta^2\, \Abf(\theta_b)\abf'_{\theta_b} \abf_{\theta_b}'^{H}\Abf(\theta_b)^H \nonumber\\
&\quad + \frac{\sigma_\theta^2}{2}\, \Abf(\theta_b) \left(\abf''_{\theta_b}\abf(\theta_b)^H
+\abf(\theta_b)\abf_{\theta_b}''^{H}\right) \Abf(\theta_b)^H .
\label{eu_expected_c}\\ \qquad
\Qbf_{\text{\tiny SI}} &\triangleq \Hbf_\text{\tiny SI} \abf(\theta_b) \abf(\theta_b)^H  \Hbf_\text{\tiny SI}^H
\end{align}

\end{lemma}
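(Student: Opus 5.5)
The plan is to substitute the matched-filter output of Lemma~\ref{lemma1} into the definition \eqref{gammas} and evaluate the numerator and denominator expectations separately. The randomness sits in the estimation errors: we write $\hat\theta_b=\theta_b-\Delta\theta_b$ and $\hat\tau_b=\tau_b-\Delta\tau_b$. We model $\Delta\theta_b$ and $\Delta\tau_b$ as zero-mean and mutually independent, with variances $\sigma_\theta^2$ and $\sigma_\tau^2$; these are the CRB-level values supplied by the later uncertainty analysis. We then keep only terms up to second order in the errors.

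\textbf{Denominator.} From Lemma~\ref{lemma1},
\[
\mathbb{E}\{|\vbf_s^H\tilde\ybf_{\text{\tiny SI}}|^2\}=p_s|c(\ell'(\tau_b))|^2\,\vbf_s^H\Hbf_{\text{\tiny SI}}\,\mathbb{E}\{\abf(\hat\theta_b)\abf(\hat\theta_b)^H\}\,\Hbf_{\text{\tiny SI}}^H\vbf_s .
\]
Here we use the same approximation already adopted for \eqref{eu_SINR_user1_xN}, namely $\Hbf_{\text{\tiny SI}}\abf(\hat\theta_b)\approx\Hbf_{\text{\tiny SI}}\abf(\theta_b)$. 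This is justified because the SI term is already suppressed, so its second-order error corrections are negligible. It yields $p_s|c(\ell'(\tau_b))|^2\vbf_s^H\Qbf_{\text{\tiny SI}}\vbf_s$. The noise term contributes $\sigma_b^2\|\vbf_s\|^2$, which becomes $\sigma_b^2$ under the normalization $\|\vbf_s\|=1$. The offset $\ell'(\tau_b)$ is a rounded quantity, so for $\Delta\tau_b\ll T$ it does not vary with the delay error.

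\textbf{Numerator.} We Taylor-expand the steering vector to second order:
\[
\abf(\hat\theta_b)\approx\abf(\theta_b)-\Delta\theta_b\,\abf'_{\theta_b}+\tfrac12\Delta\theta_b^2\,\abf''_{\theta_b}.
\]
We multiply this by the delay factor $1+2R''_\phi(0)\Delta\tau_b^2$ from Lemma~\ref{lemma1}. We then form the outer product $\tilde\ybf_{\text{s}}\tilde\ybf_{\text{s}}^H$ and retain all terms of total order at most two in $(\Delta\theta_b,\Delta\tau_b)$.

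Taking expectations, the linear terms vanish because the errors are zero-mean. The term $\Delta\theta_b^2\,\abf'\abf'^H$ gives $\sigma_\theta^2\,\Abf\abf'\abf'^H\Abf^H$. The cross terms $\tfrac12\Delta\theta_b^2(\abf''\abf^H+\abf\abf''^H)$ give the fourth line of \eqref{eu_expected_c}. The delay factor squared, $(1+2R''_\phi(0)\Delta\tau_b^2)^2\approx1+4R''_\phi(0)\Delta\tau_b^2$, gives the second line. Mixed angle–delay terms are fourth order, or vanish by independence, and are dropped. Collecting these terms with the prefactor $p_s|\beta_0|^2L^2$ gives $p_s|\beta_0|^2L^2\,\vbf_s^H\Qbf_s\vbf_s$.

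\textbf{Main obstacle.} The delicate point is the consistency of the truncation. First, we must state the assumption that the errors are small enough, at CRB scale, for third- and fourth-order moments to be negligible. Second, we must check the sign conventions: the first-order term enters with a minus sign, but it appears squared in $\abf'\abf'^H$, while the $\abf''$ cross terms enter with a plus sign. Third, we must argue that the SI term may be evaluated at the true angle. I would handle the last point by invoking the approximation $\|\Hbf_{\text{\tiny SI}}\abf(\hat\theta_b)\|\approx\|\Hbf_{\text{\tiny SI}}\abf(\theta_b)\|$ stated before \eqref{eu_SINR_user1_xN}.
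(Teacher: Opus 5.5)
Your proposal is correct and follows essentially the same route as the paper's proof in Appendix~\ref{AppendixB}. Both use a second-order Taylor expansion of $\abf(\hat\theta_b)$ about $\theta_b$, combined with the delay factor $1+2R''_\phi(0)\Delta\tau_b^2$ from Lemma~\ref{lemma1}, then take the expectation keeping terms up to $\mathcal{O}(\sigma_\theta^2)$ and $\mathcal{O}(\sigma_\tau^2)$, and evaluate the SI term at the true angle via $\|\Hbf_{\text{\tiny SI}}\abf(\hat\theta_b)\|\approx\|\Hbf_{\text{\tiny SI}}\abf(\theta_b)\|$. One small point: your independence assumption on $\Delta\theta_b$ and $\Delta\tau_b$ is not needed, because every mixed angle--delay term is of third or higher order and is dropped by the truncation, as the paper does; it also sits slightly at odds with the cross-covariance $\sigma_{(\tau,\theta)}$ the paper keeps later in Eve's SINR.
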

\begin{proof}
    See Appendix \ref{AppendixB}.
\end{proof}

\subsection{Eavesdropping Performance under Location Uncertainty}
The location uncertainty of Eve introduces stochastic perturbations in the channel estimates, affecting the wiretapping performance. Accordingly, we model the actual eavesdropping channels, taking estimation uncertainties into account. To that end, we approximate the true eavesdropping channel vectors using first-order Taylor approximations around the nominal (estimated) values of time delay and \ac{AoA} parameters. As such, the actual channel between the ISAC BS and Eve, denoted as, can be approximated as
\begin{align} \label{gb_actual}
  \tilde{\gbf}_b(\tau_b, \theta_b)\approx \gbf_b(\hat\tau_b, \hat \theta_b)  + \frac{\partial\gbf_b}{\partial\tau_b}{\bigg\rvert_{\tau_b=\hat\tau_b}}\Delta\tau_b + \frac{\partial\gbf_b}{\partial\theta_b}{\bigg\rvert_{\theta_b=\hat\theta_b}}\Delta\theta_b ,
\end{align}
where $\tilde\gbf_b(\tau_b, \theta_b)$ denotes the error-free (actual) channel between the BS and Eve, given by
\begin{align} \label{eu_truechannel_b}
 \tilde\gbf_b(\tau_b, \theta_b) = \alpha_1(\tau_b) \abf(\theta_b).
\end{align}
and $\tau_b \triangleq \hat \tau_b + \Delta  \tau_b$ and $\theta_b \triangleq \hat \theta_b + \Delta \theta_b$ such that $\Delta\tau_b \sim \mathcal{N}(0, \sigma^2_{\tau})$ and $\Delta {\theta_b}\sim \mathcal{N}(0, \sigma^2_{\theta})$.

Similarly, the actual channel between UE$_k$ and Eve is approximated by

\begin{align} \nonumber
  \tilde{\mathbf{g}}_{u_k}(\tau_{u_k}, \theta_{u_k})&\approx \gbf_{u_k}(\hat\tau_{u_k}, \hat \theta_{u_k})  + \frac{\partial\gbf_{u_k}}{\partial\tau_{u_k}}{\bigg\rvert_{\tau_{u_k}=\hat\tau_{u_k}}}\Delta\tau_{u_k} \\ \label{gk_actual}
 & + \frac{\partial\gbf_{u_k}}{\partial\theta_{u_k}}{\bigg\rvert_{\theta_{u_k}=\hat\theta_{u_k}}}\Delta\theta_{u_k} ,
\end{align}
where $\tilde{\gbf}_{u_k}(\tau_{u_k}, \theta_{u_k})$ denotes the error-free link between UE$_k$ and Eve, given by

\begin{align} \label{eu:truechannel_uk}
 \tilde\gbf_{u_k}(\tau_{u_k}, \theta_{u_k}) = \alpha_1(\tau_{u_k}) \abf(\theta_{u_k}).
\end{align}

In addition, $\tau_{u_k} \triangleq \hat \tau_{u_k} + \Delta  \tau_{u_k}$ and $\theta_{u_k} \triangleq \hat \theta_{u_k} + \Delta \theta_{u_k}$. Here, we assume, for analytical tractability, that the delay and angular estimation errors for the UE$_k$-Eve links follow the same distributions as those for the BS--Eve link. This is because Eve's location is estimated by the BS through sensing and subsequently shared with the UEs; therefore,  $\Delta\tau_{u_k} \sim \mathcal{N}(0, \sigma^2_{\tau})$ and $\Delta{\theta_{u_k}} \sim \mathcal{N}(0, \sigma^2_{\theta})$.

In the following, we obtain a fundamental lower bound for the variances $\sigma^2_{\tau}$ and $\sigma^2_{\theta}$ by means of the \ac{CRB}, which is used to evaluate the received SINR performance at Eve under such realistic channels.

\section{Uncertainty-aware Performance Analysis} \label{sec:CRB}
To capture the impact of uncertainty in the sensing parameters on system performance, this section develops a unified framework that models the radar parameter estimation bounds and analyzes the impact on the assumed communication channels and sensing signal $\xbf$. This uncertainty is rooted in the imperfect knowledge of Eve's location, specifically in her angle of departure $\theta_b$ and delay $\tau_b$.

\subsection{Estimation Performance via CRB}
\label{subsec:crb}

In the sensing subsystem, we are interested in characterizing the CRB for estimating these parameters:
\[
\boldsymbol{\zeta} = [\theta_b, \tau_b, \beta_r, \beta_i]^\top \in \mathbb{R}^{4\times 1},
\]
where \(\beta_0 = \beta_r + j \beta_i\) denotes the complex \ac{RCS} of target Eve. To derive the CRB for estimating $\zetabf$, we first rewrite \eqref{eu_ys} for multiple snapshots as  
\begin{align} \label{eq_multi:sense}
\Ybf_\text{b,s}= \beta_0 \Abf(\theta_b) \Xbf + \Hbf_\text{\tiny SI}\widetilde\Xbf+\Nbf_\text{b},
\end{align}
where the radar transmit waveform matrices are  $\Xbf = [\xbf_1(t_1-2\tau_b), ...,\xbf_L(t_L-2\tau_b)] \in \mathbb{C}^{N\times L}$, $\widetilde\Xbf = [\xbf_1(t_1), ...,\xbf_L(t_L)] \in \mathbb{C}^{N\times L}$ and $\Nbf_\text{b}=[\nbf_{{\text{b}}_1},...,\nbf_{{\text{b}}_L}]\in \mathbb{C}^{N \times L}$ is AWGN matrix.

We vectorize the sensing signal $\Ybf_\text{b,s}$ as
\begin{align} \label{eq_vectorizing} \nonumber
    \ybf_\text{b,s} &= \vect(\Ybf_\text{b,s}) \\
     &= \beta_0(\Ibf_L \otimes \Abf(\theta_b))\chibf+(\Ibf_L \otimes \Hbf_\text{\tiny SI})\tilde\chibf+{\nbf}_\text{b} ,
\end{align}
where ${\chibf}=\vect(\Xbf)\in\mathbb{C}^{NL \times 1}$, ${\widetilde\chibf}=\vect(\widetilde\Xbf)\in\mathbb{C}^{NL \times 1}$, and $\nbf_\text{b} = \vect(\Nbf_\text{b})\sim \mathcal{CN}(0,\sigma^2_\text{b} \Ibf_{NL})$ .

To proceed, we derive the \ac{FIM} \(\mathbf{F} \in \mathbb{R}^{4 \times 4}\) associated with \(\boldsymbol{\zeta}\). After vectorization, the sensing model contains both noise and a residual SI component. We therefore approximate the effective noise covariance as 
\begin{equation}
    \mathbf{C}_b \approx \sigma^2_b\mathbf{I}_{NL}+\mathbf{C}_\text{\tiny SI},
\end{equation}
where $\sigma^2_b\mathbf{I}_{NL}$ and $\mathbf{C}_{\text{\tiny SI}}$ denote the
covariance matrices of the noise and residual SI term, respectively.

The residual SI contribution can be expressed as
\[\wbf_\text{\tiny SI} = (\Ibf_L \otimes \Hbf_\text{\tiny SI})\tilde\chibf. \]
The covariance of $\wbf_{\text{\tiny SI}}$ is given by
\begin{align} \nonumber
    \Cbf_{\text{\tiny SI}}
    &= \mathbb{E}[ \wbf_{\text{\tiny SI}} \wbf_{\text{\tiny SI}}^H ] \\ \label{eu:C_SI}
    &= (\Ibf_L \otimes \Hbf_{\text{\tiny SI}}) \Rbf_{\tilde\chi} (\Ibf_L \otimes \Hbf_{\text{\tiny SI}})^H,
\end{align}
where $\Rbf_{\tilde\chi} = \mathbb{E}[\tilde\chibf \tilde\chibf^H]$. 

Let $\qbf \in \mathbb{C}^{L \times 1}$ collect the residual SI samples across the $L$ snapshots and define its temporal covariance as 
\begin{equation}\label{eq:Rq_def} \Rbf_q \triangleq \mathbb{E}[\qbf\qbf^H], \qquad [\Rbf_q]_{n,k} = \mathbb{E}\{q(t_n)\,q(t_k)^*\}. \end{equation} 
We have $[\Rbf_q]_{n,k}=c(t_n-t_k)$, and for uniform sampling $t_n=nT$, the matrix $\Rbf_q$ is Toeplitz with entries $c((n-k)T)$. Thus
\begin{equation}\label{eu:Rtildechi}
\Rbf_{\tilde\chi}
= p_s\Big(\Rbf_q \otimes \abf(\hat\theta_b)\abf(\hat\theta_b)^H\Big).
\end{equation}
Substituting \eqref{eu:Rtildechi} into \eqref{eu:C_SI}, the effective covariance can be expressed as 
\begin{equation}\label{eu:CSI_final}
\Cbf_{\text{\tiny SI}}
= p_s\Big(\Rbf_q \otimes \Hbf_{\text{\tiny SI}}\abf(\hat\theta_b)\abf(\hat\theta_b)^H\Hbf_{\text{\tiny SI}}^H\Big).
\end{equation}

Under this effective noise model, each element of \(\mathbf{F}\) is given by \cite{Kay1993}
\begin{align} \label{eu_16}
   [\mathbf{F}]_{ij} = \operatorname{tr} \left( \mathbf{C}_b^{-1} \frac{\partial \mathbf{C}_b}{\partial \zeta_i} \mathbf{C}_b^{-1} \frac{\partial \mathbf{C}_b}{\partial \zeta_j} \right) + 2 \Re \left\{ \frac{\partial \tilde{\mubf}^H }{\partial \zeta_i} \mathbf{C}_b^{-1} \frac{\partial \tilde{\mubf} }{\partial \zeta_j} \right\},
\end{align}
where \(\tilde{\mubf} = \mathbb{E}[\ybf_\text{b,s}]\), and $\zeta_i$ is the $i$th element of the parameter vector $\pmb{\zeta}$. Since \(\mathbf{C}_b\)  is independent of \(\boldsymbol{\zeta}\), the first term in \eqref{eu_16} vanishes, resulting in
\begin{align} \label{eq:FIM}
    [\mathbf{F}]_{ij} = 2\Re \left\{ \frac{\partial \tilde{\mubf}^H }{\partial \zeta_i} \mathbf{C}_b^{-1}
    \frac{\partial\tilde{\mubf}}{\partial \zeta_j} \right\}, \quad i,j \in \{1,2,3,4\},
\end{align}
and the CRB is computed as 
\begin{align}\label{eq:CRB}
    \text{CRB}(\zeta_i) = [\Fbf^{-1}]_{ii}.
\end{align}

The derivatives of \(\tilde{\mubf}\) with respect to the parameters are
\begin{subequations} \label{eq:mu-derivatives}
\begin{align}
\frac{\partial \tilde{\mubf}}{\partial \theta_b} &= {\beta}_0 \dot\Abf_L \chibf, \\
\frac{\partial \tilde{\mubf}}{\partial \tau_b} &= {\beta}_0 \Abf_L(\theta_b) \dot\chibf, \\
\frac{\partial \tilde{\mubf}}{\partial \beta_r} &= \Abf_L(\theta_b)\chibf, \\
\frac{\partial \tilde{\mubf}}{\partial \beta_i} &= j \Abf_L(\theta_b)\chibf.
\end{align}
\end{subequations}
where $\dot\Abf_L=\Ibf_L\otimes\dot{\mathbf{A}}$ and $\Abf_L(\theta_b)=\Ibf_L\otimes \Abf(\theta_b)$.
The various derivatives are defined as
$\dot{\mathbf{A}}=\frac{\partial\Abf(\theta_b)}{\partial\theta_b}$, $\dot{\chibf} = \frac{\partial\vect(\Xbf)}{\partial\tau_b}$.
\begin{subequations}
\begin{align}
\left.\frac{\partial \Abf(\theta)}{\partial \theta}\right|_{\theta=\theta_b}
&= \alpha_2(\tau_b)\left(\dot{\abf}(\theta_b)\abf(\theta_b)^H
+ \abf(\theta_b)\dot{\abf}(\theta_b)^H\right),  \\ \nonumber
\frac{\partial\vect(\Xbf)}{\partial \tau_b}
&=-2\left.\frac{\partial\vect(\Xbf)}{\partial t}
\right|_{t_\ell=t_\ell-2\tau_b}\\
&=-2\,\vect\!\left(\left[\dot{\xbf}_1(t_1-2\tau_b),\dots,\dot{\xbf}_L(t_L-2\tau_b)\right]\right).
\end{align}
\end{subequations}
Using \eqref{eq:mu-derivatives}, we calculate the \ac{FIM} of $\pmb{\zeta}$ as
\begin{align}\label{eu_FIM_1}
        \mathbf{F} = \begin{bmatrix}
        f_{\theta_b\theta_b} & f_{\theta_b \tau_b} & f_{\theta_b \beta_r} & f_{\theta_b \beta_i}   \\  
        f_{\tau_b \theta_b}  & f_{\tau_b \tau_b}   & f_{\tau_b \beta_r}    & f_{\tau_b \beta_i} \\
        f_{\beta_r \theta_b} & f_{\beta_r \tau_b} & f_{\beta_r \beta_r} & f_{\beta_r \beta_i}\\
        f_{\beta_i \theta_b}  &  f_{\beta_i \tau_b} & f_{\beta_i \beta_r} &f_{\beta_i \beta_i}
    \end{bmatrix},
\end{align}
The entries of $\Fbf$ are derived in Appendix \ref{AppendixC}.

\subsection{Received SINR at Eve}
To evaluate the SINR at Eve, we consider the following cases from the SIC perspective: 
\begin{itemize}
    \item Case 1 (No SIC): When signals from both UEs are either strong or weak, Eve cannot exploit SIC to efficiently decode the messages. In this case, Eve will try to decode each UE signal in the presence of interference from the other UE signal. Thus, the SINR at Eve to decode the $k$th UE is given by
    \begin{align} \label{eu_case1_Eve}
      \gamma_{e, k} = \frac{  \mathbb{E}\{\vert \tilde{\gbf}_{u_k}^H\tilde \wbf_k \vert^2 \}}{\sum_{m \neq k} 
      \mathbb{E}\{\vert \tilde{\gbf}_{u_m}^H \tilde \wbf_m \vert^2 \} + \mathbb{E}\{\vert \tilde{\gbf}_{b}^H \xbf \vert^2\}+ \sigma^2_e}.
    \end{align}
     \item Case 2 (With SIC): If one signal is significantly stronger than the other, Eve applies SIC to decode the dominant signal first, denoted by UE$_k$. Then, the SINR for decoding the weaker signal UE$_m$ after canceling the dominant signal is given by
     \begin{align} \label{eu_case2_Eve}
      \gamma_{e, m} = \frac{ \mathbb{E}\{\vert \tilde{\gbf}_{u_m}^H  \tilde \wbf_m \vert^2\} }{\mathbb{E}\{\vert \tilde{\gbf}_{b}^H \xbf \vert^2\}+ \sigma^2_e}, \qquad m,k\in\{1,2\},\; m\neq k.
    \end{align} 
\end{itemize}

For the Eve decoding analysis, with no loss of generality, we assume that the received signal power of UE$_2$ at Eve is stronger than that of UE$_1$, such that \( \vert \tilde{\gbf}_{u_2}^H \tilde{\wbf}_2 \vert^2 \geq \lambda_{\text{\tiny SIC}} \vert \tilde{\gbf}_{u_1}^H  \tilde{\wbf}_1 \vert^2 \) holds, where \( \lambda_{\text{\tiny SIC}} \geq 1 \) is a predefined power-ordering factor for SIC at Eve. This assumption reflects Case 2, where Eve is capable of potentially decoding the dominant signal from UE$_2$ first, then attempting to recover UE$_{1}$'s message via SIC operation. 
To limit the information leakage to Eve at any stage of decoding, we consider the maximum effective SINR across Eve’s SIC decoding stages, given by
    \begin{align} \nonumber
       \gamma_e &= \max\Bigg\{\frac{ \mathbb{E} \{ \vert \tilde{\gbf}_{u_2}^H  \tilde \wbf_2 \vert^2 \} }{ \mathbb{E}\{\vert \tilde{\gbf}_{u_1}^H \tilde \wbf_1 \vert^2 \} + \mathbb{E}\{\vert \tilde{\gbf}_{b}^H  \xbf \vert^2\}+ \sigma^2_e}, \\ \label{eu_case2_Eve_max}
      & \qquad \qquad \qquad \qquad \frac{ \mathbb{E} \{ \vert \tilde{\gbf}_{u_1}^H  \tilde \wbf_1 \vert^2\}}{\mathbb{E}\{\vert \tilde{\gbf}_{b}^H \xbf \vert^2\}+ \sigma^2_e}\Bigg\} .
    \end{align}
A closed-form analytical expression for $\gamma_e$ is presented in the following lemma.
\begin{lemma}
Accounting for the uncertainty in her location, the received SINR at Eve is approximately   
\begin{align} \label{eu_case2_Eve_xNew}
    \boxed{\gamma_e \approx \max\left\{\frac{ \tilde\wbf_2^H \Gbf_{u_2} \tilde{\wbf}_2}
  { \tilde\wbf_1^H \Gbf_{u_1} \tilde\wbf_1+  p_s q_b+ \sigma^2_e}, \frac{\tilde\wbf_1^H \Gbf_{u_1} \tilde{\wbf}_1}
   { p_s q_b + \sigma^2_e}\right\},}
\end{align}
where \begin{align} \nonumber
     q_b &= \big( \alpha_1(\hat\tau_b)^2 \|\abf(\hat\theta_b)\|^4 + \sigma^2_\tau \alpha_{\tau_b}'^2 \|\abf(\hat\theta_b)\|^4 \\ \nonumber
    & \quad + \sigma_{(\tau,\theta)} \alpha_{\tau_b}' \alpha_1(\hat\tau_b)^H \abf_{\theta_b}'^H \abf(\hat\theta_b)   \|\abf(\hat\theta_b)\|^2 \\ \nonumber
    & \quad + \sigma_{(\theta,\tau)} \alpha_{\tau_b}'^H \alpha_1(\hat\tau_b) \abf(\hat\theta_b)^H \abf_{\theta_b}' \|\abf(\hat\theta_b)\|^2  \\ \label{eu_qb}
    & \quad + \sigma^2_\theta \alpha_1(\hat\tau_b)^2 \vert \abf(\hat\theta_b)^H \abf'_{\theta_b} \vert^2 \big)
\end{align}
\begin{align} \nonumber
    \Gbf_{u_k}&\triangleq \mathbb{E}\!\left\{\tilde{\gbf}_{u_k}\tilde{\gbf}_{u_k}^{H}\right\}\\ \nonumber
    &\approx \alpha_1(\hat\tau_{u_k})^2 \abf(\hat\theta_{u_k}) \abf(\hat\theta_{u_k})^H +\sigma^2_{\theta} \alpha_1(\hat\tau_{u_k})^2 \abf'_{\theta_u}\abf'^{H}_{\theta_u}  \\ \nonumber
   & \quad + \sigma_{(\tau,\theta)} \alpha_1(\hat\tau_{u_k}) \alpha'_{\tau_u} \abf(\hat\theta_{u_k}) \abf'^H_{\theta_u}\\ \nonumber
   & \quad + \sigma_{(\tau,\theta)} \alpha'_{\tau_u} \alpha_1(\hat\tau_{u_k}) \abf'_{\theta_u} \abf(\hat\theta_{u_k})^H\\ 
   &\quad + \sigma^2_{\tau} \alpha'^2_{\tau_u} \abf(\hat\theta_{u_k}) \abf(\hat\theta_{u_k})^H
    \label{eq:Guk}
\end{align}
\end{lemma}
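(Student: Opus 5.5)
The plan is to evaluate separately each of the three expectations appearing in \eqref{eu_case2_Eve_max}, substitute them, and leave the $\max\{\cdot,\cdot\}$ structure unchanged. The approximation sign enters only through the first-order Taylor models \eqref{gb_actual} and \eqref{gk_actual}, and through treating the estimation errors as zero-mean with second moments $\sigma^2_\tau$, $\sigma^2_\theta$ and cross-moment $\sigma_{(\tau,\theta)}$. Where the statement involves these moments, I will take them from the inverse of the FIM $\Fbf$ in \eqref{eq:CRB}, with the off-diagonal entry giving the cross term.

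\textbf{UE terms.} Because $\tilde\wbf_k$ is deterministic, $\mathbb{E}\{|\tilde\gbf_{u_k}^H\tilde\wbf_k|^2\}=\tilde\wbf_k^H\,\mathbb{E}\{\tilde\gbf_{u_k}\tilde\gbf_{u_k}^H\}\,\tilde\wbf_k=\tilde\wbf_k^H\Gbf_{u_k}\tilde\wbf_k$. To compute $\Gbf_{u_k}$, differentiate \eqref{eu:truechannel_uk}:
\[
\partial\gbf_{u_k}/\partial\tau_{u_k}=\alpha'_{\tau_u}\abf(\hat\theta_{u_k}), \qquad \partial\gbf_{u_k}/\partial\theta_{u_k}=\alpha_1(\hat\tau_{u_k})\abf'_{\theta_u}.
\]
Substituting these into \eqref{gk_actual} gives $\tilde\gbf_{u_k}\approx \alpha_1\abf+\alpha'_{\tau_u}\abf\,\Delta\tau+\alpha_1\abf'\,\Delta\theta$. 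Next form the outer product and take the expectation. Since $\mathbb{E}\{\Delta\tau\}=\mathbb{E}\{\Delta\theta\}=0$, the linear cross terms vanish. What remains are the nominal term, the $\sigma^2_\theta$ and $\sigma^2_\tau$ quadratic terms, and the two mixed terms weighted by $\sigma_{(\tau,\theta)}$. These five pieces are exactly the terms of \eqref{eq:Guk}.

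\textbf{Jamming term.} Here $\xbf$ is random through the symbols $a_m$. After matched sampling at Eve, with unit-power symbols independent of the channel errors, $\mathbb{E}_{\xbf}\{|\tilde\gbf_b^H\xbf|^2\}=p_s|\tilde\gbf_b^H\abf(\hat\theta_b)|^2$. Expand $\tilde\gbf_b^H\abf(\hat\theta_b)$ using \eqref{gb_actual} and \eqref{eu_truechannel_b}:
\[
\tilde\gbf_b^H\abf(\hat\theta_b)\approx\alpha_1(\hat\tau_b)\|\abf(\hat\theta_b)\|^2+\alpha'_{\tau_b}\|\abf(\hat\theta_b)\|^2\Delta\tau_b+\alpha_1(\hat\tau_b)\abf'^H_{\theta_b}\abf(\hat\theta_b)\Delta\theta_b .
\]
Squaring this and averaging over the errors again removes the linear terms. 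It leaves the nominal term $\alpha_1^2\|\abf\|^4$, the $\sigma^2_\tau$ and $\sigma^2_\theta$ terms, and the two conjugate cross terms in $\sigma_{(\tau,\theta)}$. Together these give $p_sq_b$ with $q_b$ as in \eqref{eu_qb}.

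\textbf{Main obstacle.} I expect the delicate step to be justifying the moment bookkeeping, in particular whether $\mathbb{E}\{\Delta\tau\Delta\theta\}$ should be retained. If the errors are independent this cross term is zero, and keeping it amounts to using the CRB covariance, i.e.\ the off-diagonal entry of $\Fbf^{-1}$, as a surrogate. A second point to check is the claim that $\xbf$ factors as $\sqrt{p_s}\abf(\hat\theta_b)$ times a unit-power scalar, which uses the normalization of $q(t)$. Once these are settled, substituting the three expectations into \eqref{eu_case2_Eve_max} yields \eqref{eu_case2_Eve_xNew}.
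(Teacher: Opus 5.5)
Your proposal is correct and follows essentially the same route as the paper's Appendix~D. The paper likewise substitutes the first-order Taylor models of $\tilde\gbf_b$ and $\tilde\gbf_{u_k}$, expands $\mathbb{E}\{|\tilde\gbf_b^H\xbf|^2\}$ and $\tilde\wbf_k^H\mathbb{E}\{\tilde\gbf_{u_k}\tilde\gbf_{u_k}^H\}\tilde\wbf_k$, and lets the zero-mean errors remove the linear terms, leaving the nominal, variance and $\sigma_{(\tau,\theta)}$ cross terms that make up $q_b$ and $\Gbf_{u_k}$. The two points you flag, keeping the cross-moment $\sigma_{(\tau,\theta)}$ and treating $q(t)$ as unit power, are also left implicit in the paper's own derivation.
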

\begin{proof}
    See Appendix \ref{AppendixD}.
\end{proof}

\allowdisplaybreaks[4]
\section{Joint Optimization Framework for Secure ISAC-NOMA} \label{sec:Joint_Opt}
In this section, we design a secure uplink transmission strategy for the considered ISAC-NOMA system by jointly optimizing beamforming and power allocation. Our goal is to maximize the total achievable communication rate and sensing performance while safeguarding the uplink transmissions against eavesdropping. To this end, we first formulate the resource management challenge as a non-convex optimization problem to improve communication and sensing performance, given transmit power budget and information leakage constraints. We then propose an efficient iterative algorithm to tackle it.

\subsection{Problem Formulation}  \label{sec:probelm_formulation}
We consider the following utility function
\begin{align}\label{utility}
   U(\gamma_k, \gamma_s; a_k, a_s)= \sum_{k=1}^2 a_k\log_2(1\!+\!\gamma_k) +a_s\log_2(1\!+\!\gamma_s)
\end{align}
where the coefficients $a_1$ and $a_2$ are scaling factors that determine the importance of the individual UE data rates (i.e., the uplink communication performance), and $a_s\geq0$ weights the importance of the sensing criteria. If $a_s \gg a_k$, then the sensing performance becomes our priority; otherwise, when $a_s=0$, only communication performance is considered regardless of information leakage.
We formulate the joint \ac{OP} as
\begin{subequations}
\begin{align} 
\mathcal{P}_0:~\max_{\{\tilde\wbf_k,\vbf_k,\vbf_s,p_s\}} & \sum_{k=1}^2 a_k\log_2(1\!+\!\gamma_k) + a_s\log_2(1\!+\!\gamma_s) \nonumber\\
\text{s.t.}\quad
\mathbf{C1}:&\quad \| \tilde \wbf_k \|^2 \leq P^{\mathrm{max}}_u, \quad k=1,2   \label{C1} \\
\mathbf{C2}:&\quad   p_s \geq 0, \label{C2} \\
\mathbf{C3}:&\quad \mathbb{E}\{ \| \xbf \| ^2 \} \leq P^\mathrm{max}_s,\label{C3}\\
\mathbf{C4}:&\quad \gamma_\mathrm{e} \leq \gamma_{\mathrm{th}},  \label{C4} \\  \label{C5}
\mathbf{C5}: &\quad 
    \tilde\wbf_2^H \Gbf_{u_2} \tilde\wbf_2 \ge \lambda_{\text{\tiny SIC}} \, \tilde\wbf_1^H \Gbf_{u_1} \tilde\wbf_1,
\end{align}
\end{subequations}
where constraint {\bf C1} ensures that the UEs' transmit power does not exceed the maximum power budget $P^{\mathrm{max}}_u$, {\bf C2} guarantees non-negative sensing power, {\bf C3}  imposes an upper bound $P^\mathrm{max}_s$ on the total transmit power of the radar signal, {\bf C4} limits information leakage under the predefined secrecy threshold $\gamma_{\mathrm{th}}$, and {\bf C5} enforces the received-power ordering required for SIC feasibility at Eve.

Problem $\mathcal{P}_0$ is non-convex and challenging to solve. The non-convexity of the maximization problem arises from the non-concave objective function and non-convex constraints \eqref{C3}--\eqref{C5} with respect to (w.r.t.) the optimization variables $\tilde\wbf_k,\vbf_k,\vbf_s, \text{ and } p_s$. In the following, we present an algorithm to efficiently tackle the \ac{OP}.

\subsection{Proposed Solution} \label{sec:Proposed Solution}
To tackle the non-convex \ac{OP}, we apply an \ac{AO}, where we split $\mathcal{P}_0$ into two subproblems: 
\begin{itemize}
    \item \textit{Subproblem 1}:  Optimize $\vbf_k$ and $\vbf_s$ while keeping the other variables fixed.
    \item \textit{Subproblem 2}: Jointly optimize the beamformer vectors $\tilde\wbf_k,~\forall k$, and the sensing power $p_s$.  
\end{itemize}
These two subproblems are then solved iteratively, alternating between them until convergence, starting from an initial feasible point.

\subsubsection{Solution to Subproblem 1}
In problem $\mathcal{P}_0$, the precoding vectors $\vbf_k$ and $\vbf_s$ appear exclusively in the objective function via the SINR terms in the form of a sum of logarithm expressions.  Since the logarithms in the objective are strictly increasing, maximizing the objective w.r.t. these variables is equivalent to directly maximizing their respective SINRs.

The SINRs \eqref{eu_SINR_user1_xN}, \eqref{eu_SINR_user2_xN}, and \eqref{SINR_sensing_final}  can be rewritten as Rayleigh quotients, i.e., ratios of quadratic forms:

\begin{subequations}\label{eq:sinr_rayleigh}
\begin{align}
\gamma_1 &=\frac{\vbf_1^H \tilde{\Abf}\vbf_1}{\vbf_1^H \tilde{\Bbf}\vbf_1}, \\*
\gamma_2 &=\frac{\vbf_2^H \tilde{\Cbf}\vbf_2}{\vbf_2^H \tilde{\Dbf}\vbf_2}, \\*
\gamma_s &=\frac{\vbf_s^H \tilde{\Ebf}\vbf_s}{\vbf_s^H \tilde{\Fbf}\vbf_s},
\end{align}
\end{subequations}

where 
\begin{align}
\tilde{\Abf} &=\Hbf_{u_1}\tilde{\wbf}_1\tilde{\wbf}_1^H\Hbf_{u_1}^H, \notag\\ \tilde{\Bbf} &=\Hbf_{u_2}\tilde{\wbf}_2\tilde{\wbf}_2^H\Hbf_{u_2}^H + p_s\Qbf_{\text{\tiny SI}_{\text{u}}}+ \sigma_b^2\Ibf, \notag\\
\tilde{\Cbf} &=\Hbf_{u_2}\tilde{\wbf}_2\tilde{\wbf}_2^H\Hbf_{u_2}^H,\quad
\tilde{\Dbf} = p_s\Qbf_{\text{\tiny SI}_{\text{u}}}+ \sigma_b^2\Ibf, \notag\\
\tilde{\Ebf} &= p_s|\beta_0|^2L^2\Qbf_s,\quad
\tilde{\Fbf} = p_s|c(\ell'(\tau_b))|^2\Qbf_{\text{SI}}+ \sigma_b^2\Ibf.
\end{align}

Since all of the above matrices are Hermitian and the matrices in the denominator are positive definite, the optimal SINRs can be expressed as

\begin{subequations}\label{eu_precoders}
\begin{align}
\gamma_1 &=
\lambda_{\max}\bigl(\tilde{\Bbf}^{-1}\tilde{\Abf}\bigr),
\quad \|\vbf_1\|^2 = 1, \\
\gamma_2 &=
\lambda_{\max}\bigl(\tilde{\Dbf}^{-1}\tilde{\Cbf}\bigr),
\quad \|\vbf_2\|^2 = 1, \\
\gamma_s &=
\lambda_{\max}\bigl(\tilde{\Fbf}^{-1}\tilde{\Ebf}\bigr),
\quad \|\vbf_s\|^2 = 1.
\end{align}
\end{subequations}
where $\lambda_{\max}(\cdot)$ denotes the largest eigenvalue of its matrix argument. Since the precoders are not involved in any constraints, they can be updated in closed form at each iteration, given fixed values of $\wbf_k$ and $p_s$. Therefore, the optimal precoding vectors for this subproblem are the normalized generalized eigenvectors corresponding to the largest generalized eigenvalue:

\begin{subequations}\label{eu_opt:precoder}
\begin{align}
\vbf_1^* &= \operatorname{eigvec}_{\max}(\tilde{\Abf},\tilde{\Bbf}),\\
\vbf_2^* &= \operatorname{eigvec}_{\max}(\tilde{\Cbf},\tilde{\Dbf}),\\
\vbf_s^* &= \operatorname{eigvec}_{\max}(\tilde{\Ebf},\tilde{\Fbf}).
\end{align}
\end{subequations}

\subsubsection{Solution to Subproblem 2: Joint Optimization of \texorpdfstring{$\tilde{\wbf}_k, p_s$}{wk, ps}}
Given fixed precoders $\vbf_k$, $\vbf_s$, we jointly optimize the beamforming vectors $\tilde\wbf_k$ and sensing power $p_s$. In light of this, we reformulate problem $\mathcal{P}_0$ as
\begin{subequations} \label{PF}
\begin{align} 
\mathcal{P}:~\max_{\tilde{\mathbf{W}}_k,\, p_s} \quad & 
a_1 \log_2\left(1 + \frac{\operatorname{tr}(\tilde{\mathbf{W}}_1 \mathbf{M}_1)}{\operatorname{tr}(\tilde{\mathbf{W}}_2 \mathbf{N}_2) + p_s q_1 + \sigma_b^2} \right) \nonumber \\
& + a_2 \log_2\left(1 + \frac{\operatorname{tr}(\tilde{\mathbf{W}}_2 \mathbf{M}_2)}{p_s q_2 + \sigma_b^2} \right) \nonumber \\
&+ a_s \log_2\left(1 + \frac{p_s q_s}{  p_s q_{\text{SI}} + \sigma_b^2} \right) \label{eu_PF} \\[5pt]
\text{s.t.} \quad 
&\operatorname{tr}(\tilde{\mathbf{W}}_k) \leq P_U^{\text{max}}, \quad \forall k \label{cst1}\\
&\hspace{-5mm}\operatorname{tr}(\Gbf_{u_2}\tilde{\mathbf{W}}_2) \leq \gamma_\mathrm{th} \left( \operatorname{tr}(\Gbf_{u_1}\tilde{\mathbf{W}}_1) + p_s q_b + \sigma_e^2 \right),\label{cst2}\\
& \tr (\Gbf_{u_1}\tilde\Wbf_1 ) \leq \gamma_\mathrm{th} \left ( p_s q_b + \sigma_e^2 \right)\\ 
& \tr(\Gbf_{u_2}\tilde\Wbf_2) -\lambda_{\text{\tiny SIC}} \tr (\Gbf_{u_1}\tilde \Wbf_1) \ge 0  \\
&0 \le p_s \leq P^\mathrm{max}_s/\|  \abf(\theta_b)\|^2  , \label{cst3} \\
&\tilde{\mathbf{W}}_k \succeq 0, \label{cst4}\\
&\operatorname{rank}(\tilde{\mathbf{W}}_k) = 1,\quad \forall k \label{cst5}
\end{align}
\end{subequations}
where we define $\tilde\Wbf_k= \tilde\wbf_k\tilde\wbf_k^H$ and thus the PSD and rank-1 constraints \eqref{cst4}--\eqref{cst5} are introduced. $\Gbf_{u_k}$, $k\in\{1,2\}$, denotes the uncertainty-aware channel covariance matrix derived in \eqref{eq:Guk}, and the remaining coefficients are defined as follows:
\[
\begin{aligned}
\Mbf_1 &=  \Hbf_{u_1}^H \vbf_1 \vbf_1^H \Hbf_{u_1},\quad
    \Mbf_2 =  \Hbf_{u_2}^H \vbf_2 \vbf_2^H \Hbf_{u_2},\\
    \Nbf_2 &=  \Hbf_{u_2}^H \vbf_1 \vbf_1^H \Hbf_{u_2},\\
    q_k &=   \vbf_k^H \Qbf_{\text{\tiny SI}_\text{u}}\vbf_k,\\
    q_\text{\tiny SI} &= |c(\ell'(\tau_b))|^2 \vbf_s^H \Qbf_\text{\tiny SI} \vbf_s, \text{ and }
    q_s = \vert \beta_0 \vert^2  L^2 (\vbf_s^H \Qbf_s \vbf_s).
\end{aligned}
\]

Problem $\mathcal{P}$ is challenging to solve due to the non-convex objective function.  To address this issue, we employ the quadratic transform technique introduced in \cite{FP2018Shen}, which enables the transformation of each fractional term into a concave surrogate function through the introduction of auxiliary scalar variables. Specifically, let $y_1$, $y_2$, and $y_s$ denote auxiliary variables associated with each SINR term in the objective. Given the monotonicity and concavity of the log function $\log(1+x),~x\geq0$, the original problem can be approximately relaxed to the following problem, which is solved iteratively in a \ac{SCA} framework \cite{Boyd2006}:
\begin{subequations} \label{eq:sca_relaxed}
\begin{align}
\mathcal{P}_1:~\max_{\tilde{\mathbf{W}}_k,\, p_s} \quad & 
a_1 \log_2\Bigg(1 + 2 y_1^{(t)} \sqrt{\operatorname{tr}(\tilde{\mathbf{W}}_1 \mathbf{M}_1)} \nonumber \\
& \quad - \left(y_1^{(t)}\right)^2 \left( \operatorname{tr}(\tilde{\mathbf{W}}_2 \mathbf{N}_2) + p_s q_1 + \sigma_b^2 \right) \Bigg) \nonumber \\
& + a_2 \log_2\Bigg(1 + 2 y_2^{(t)} \sqrt{\operatorname{tr}(\tilde{\mathbf{W}}_2 \mathbf{M}_2)} \nonumber \\
& \quad - \left(y_2^{(t)}\right)^2 \left( p_s q_2 + \sigma_b^2 \right) \Bigg) \nonumber \\
& + a_s \log_2\Bigg(1 + 2 y_s^{(t)} \sqrt{p_s q_s} - \left(y_s^{(t)}\right)^2 \nonumber \\
& \quad \left(  p_s q_{\text{SI}} + \sigma_b^2 \right) \Bigg) \label{eq:sca_obj}
\\[5pt]
\text{s.t.} \quad 
& \eqref{cst1}\text{--}\eqref{cst4}.\label{csts_cvx}
\end{align}
\end{subequations}
In the above reformulation, the rank-one constraint \eqref{cst5} is temporarily dropped following the SDR approach \cite{SDP2010Quan} to allow tractable optimization. With fixed $y_1$, $y_2$, and $y_s$,  problem $\mathcal{P}_1$ is convex. 
The auxiliary variables are updated at each iteration as follows:
\begin{subequations} \label{eq:aux_updates}
\begin{align}
y_1 &= \frac{\sqrt{\operatorname{tr}(\tilde{\mathbf{W}}_1 \mathbf{M}_1)}}{\operatorname{tr}(\tilde{\mathbf{W}}_2 \mathbf{N}_2) + p_s q_1 + \sigma_b^2}, \\
y_2 &= \frac{\sqrt{\operatorname{tr}(\tilde{\mathbf{W}}_2 \mathbf{M}_2)}}{p_s q_2 + \sigma_b^2}, \\
y_s &= \frac{\sqrt{p_s q_s}}{p_s q_{\text{SI}} + \sigma_b^2}.
\end{align}
\end{subequations}
The iteration proceeds until the objective value converges. This results in a monotonic improvement of the approximated objective and convergence to a stationary point of the relaxed problem.

Problem $\mathcal{P}_1$, despite being convex, is not acceptable to the CVX solver in its current form. To tackle this issue, we use the epigraph relaxation method to make the convex problem obey the Disciplined Convex Programming (DCP) ruleset. Let \( x_1, x_2, x_s \in \mathbb{R}_+ \) be non-negative slack variables, and also define \( y_1, y_2, y_s \) as additional slack variables. We rewrite problem $\mathcal{P}_1$ equivalently as
\begin{subequations} \label{eq:sca_epigraph_relax}
\begin{align}
\mathcal{P}_2:~&\max_{\tilde{\mathbf{W}}_k,\,,\, p_s,\, \{x_i\}, \{y_i\}} 
\begin{aligned}
a_1 y_1 &+ a_2 y_2+ a_s y_s\nonumber
\end{aligned} \label{eq:sca_obj_epi} \\[5pt]
\text{s.t.} \quad 
& \log_2(1 + x_1) \ge y_1,\\
& \log_2(1 + x_2) \ge y_2,\\
& \log_2(1 + x_s) \ge y_s,\\
& \hat{\gamma}_1 \ge x_1,~\hat{\gamma}_2 \ge x_2,~\hat{\gamma}_s \ge x_s,~\eqref{csts_cvx},
\end{align}
\end{subequations}
where $\hat{\gamma}_1$, $\hat{\gamma}_2$, and $\hat{\gamma}_s$ are concave functions of the optimization variables $\tilde{\mathbf{W}}_1$, $\tilde{\mathbf{W}}_2$, and $p_s$, defined respectively as
\begin{subequations} \label{eq:surrogate_terms}
\begin{align}
\hat{\gamma}_1 &= 2 y_1 \sqrt{\operatorname{tr}(\tilde{\mathbf{W}}_1 \mathbf{M}_1)} 
-  y^2_1 \left( \operatorname{tr}(\tilde{\mathbf{W}}_2 \mathbf{N}_2) + p_s q_1 + \sigma_b^2 \right), \\
\hat{\gamma}_2 &= 2 y_2 \sqrt{\operatorname{tr}(\tilde{\mathbf{W}}_2 \mathbf{M}_2)} 
- y^2_2 \left( p_s q_2 + \sigma_b^2 \right), \\
\hat{\gamma}_s &= 2 y_s \sqrt{p_s q_s}  - y^2_s \left(p_s q_{\text{SI}} + \sigma_b^2 \right).
\end{align}
\end{subequations}
We can now solve $\mathcal{P}_2$ by any convex optimization tool such as CVXPY to obtain near-optimal solutions of the original problem $\mathcal{P}$.

\begin{algorithm}[tpb]
\caption{Proposed iterative algorithm for power and beamforming optimization}
\label{algo}
    \begin{algorithmic}[1]
        \REQUIRE $P^\mathrm{max}_u$, $P^\mathrm{max}_s$, $\gamma_\mathrm{th}$, $\Hbf_{u_k}$, $\sigma^2_b$.
    \ENSURE $\vbf^\ast_k, \tilde\wbf^\ast_k,~\forall k$, $\vbf^\ast_s$, and $p^\ast_s$.
        \STATE \textbf{Initialization} 
        \STATE Set feasible initial points $\tilde\Wbf_k^{(0)}$, $p_s^{(0)}$, $\vbf^{(0)}_k, \vbf^{(0)}_s$ and iteration index $t \gets 0$     
        \REPEAT
           \STATE Compute  optimal precoding vectors $\vbf^{(t+1)}_k, \vbf^{(t+1)}_s$ using \eqref{eu_opt:precoder}.
        \STATE Update slack variables  $y_1^{(t+1)}$, $y_2^{(t+1)}$, $y_s^{(t+1)}$ in \eqref{eq:aux_updates}, given $\tilde\Wbf_k^{(t)}$, $p_s^{(t)}$, $\vbf^{(t)}_k, \vbf^{(t)}_s$.
        \STATE Solve joint beamforming and power optimization via convex problem $\mathcal{P}_2$ in CVXPY and obtain $\tilde \Wbf^{(t+1)}_k$ and $p^{(t+1)}_s$.
        \STATE Extract rank-1 beamforming vector solution $\tilde\wbf^{(t+1)}$ via SVD and Gaussian randomization \cite{SDP2010Quan} on $\tilde\Wbf_k^{(t+1)}$.
        \STATE Set $t \gets t + 1$
        \UNTIL Convergence: Fractional increase in the objective function $f$ is below threshold $\varepsilon$, i.e. $\frac{|f^{(t+1)} - f^{(t)}|}{f^{(t)}} \le \varepsilon$.
 \RETURN 
 $\vbf^\ast_k \gets \vbf^{(t+1)}_k$, $\tilde\wbf^\ast_k \gets \tilde\wbf^{(t+1)}_k$, $\vbf^\ast_s = \vbf^{(t+1)}_s$, and $p^\ast_s \gets p^{(t+1)}_s$.
        \end{algorithmic}
\end{algorithm}

%%%%%%%%%%%%%%%%%%%%%%%%%%%%%%%%%%%%%%%%%%%%%%%%%%%%%%%%%%%%%
%------------------------  Simulations ---------------------%
%%%%%%%%%%%%%%%%%%%%%%%%%%%%%%%%%%%%%%%%%%%%%%%%%%%%%%%%%%%%%
\allowdisplaybreaks[0]
 \section{Simulation Results} \label{sec:simulations}

\begin{table}[t]
\caption{Simulation Parameters}
\label{tab:sim_params}
\centering
\renewcommand{\arraystretch}{1.15}
\begin{tabular}{lcc}
\hline\hline
\textbf{Parameter} & \textbf{Symbol} & \textbf{Value} \\
\toprule
\multicolumn{3}{l}{} \\
Number of BS antennas       & $N$            & $8$               \\
Number of UE antennas       & $M$            & $4$               \\
Carrier frequency           & $f_c$          & $6$~GHz           \\
Sampling frequency          & $f_s$          & $10$~GHz          \\
Pulse repetition interval   & $T$            & $10$~ns          \\
Number of pulses per CPI    & $L$            & $75$             \\
SI channel coefficient      & $\rho$         & $0.1$           \\
\midrule
\multicolumn{3}{l}{\textit{Geometry (default)}} \\
BS-to-target distance       & $d_{\rm BE}$   & $150$~m          \\
BS-to-UE$_1$ distance       & $d_{1}$        & $30$~m          \\
BS-to-UE$_2$ distance       & $d_{2}$        & $70$~m          \\
Target angle of arrival     & $\theta_b$     & $45^\circ$        \\
Target reflection coeff.    & $\beta_0$      & $1$        \\
\midrule
\multicolumn{3}{l}{\textit{Noise}} \\
Noise figure                & $\mathrm{NF}$  & $5$~dB            \\
Noise bandwidth             & $B$            & $200$~MHz         \\
Noise power                 & $\sigma^2$     & $-86$~dBm         \\
\midrule
Max UE transmit power       & $P_U^{\max}$   & $1$~W ($30$~dBm) \\
Max sensing power budget    & $P^{\max}_s$   & $14$~W ($41.46$~dBm)\\
\midrule
\multicolumn{3}{l}{\textit{Security}} \\
Eve SINR threshold          & $\gamma_{\rm th}$ &       $0.1$ ($\approx -10$~dB) \\
SIC factor at Eve    & $\lambda_\text{\tiny SIC}$ & $1$         \\
 \midrule
 \multicolumn{3}{l}{\textit{Objective Weights}} \\
UE$_1$ rate weight          & $a_1$          & $1$      \\    
UE$_2$ rate weight          & $a_2$          & $2$      \\
Sensing rate weight         & $a_s$          & $3$       \\
\midrule
\multicolumn{3}{l}{\textit{Algorithm}} \\
Max AO iterations           & $I_{\max}$     & $10$              \\
Max SCA iterations          & $K_{\max}$     & $15$              \\
Convergence threshold       & $\varepsilon$  & $10^{-3}$         \\
\bottomrule
\end{tabular}
\end{table}

Numerical simulations are conducted to evaluate the performance of the proposed robust and secure NOMA-ISAC beamforming design. The simulation parameters are summarized in Table~\ref{tab:sim_params}.

The BS is equipped with $N=8$ antennas and serves two NOMA users, each equipped with $M=4$ antennas, while simultaneously performing monostatic radar sensing of a point target. The carrier and sampling frequencies are set to $f_c=6$~GHz and $f_s=10$~GHz, respectively. The carrier frequency corresponds to a wavelength of $\lambda=5$~cm. Each \ac{CPI} comprises $L=75$ pulses with a pulse repetition interval of $T=10$~ns. The BS-to-target distance is $d_{\rm BE}=150$~m, while the BS-to-UE distances are $d_1=30$~m and $d_2=70$~m. The target angle of arrival is set to $\theta_b=45^\circ$. The receiver noise power is set to $\sigma^2=-86$~dBm, based on a noise figure of $5$~dB and a noise bandwidth of $200$~MHz. The maximum UE transmit power and sensing power budget are $P_U^{\max}=1$~W and $P_s^{\max}=14$~W, respectively. The Eve SINR threshold is set to $\gamma_{\rm th}=0.1$, corresponding to approximately $-10$~dB. Finally, the objective-function weights are set to $a_1=1$, $a_2=2$, and $a_s=3$.

Three approaches are compared in the subsequent results:
\begin{itemize}
  \item \textbf{Proposed}: Our proposed uncertainty-aware and secure design as per Algorithm \ref{algo}.
  \item \textbf{Ideal}: The security constraint is retained, but the design assumes perfect knowledge of Eve's location 
    ($\sigma^2_\theta = 0$, $\sigma^2_\tau = 0$).
  \item \textbf{Non-secure}: The estimation uncertainty is retained, but the Eve SINR rate constraint is removed ($\gamma_{\rm th} \to \infty$), isolating the performance loss due to physical-layer security from that due to sensing uncertainty.
\end{itemize}

%%%%%%%%%%%%%%%%%%%%%%%%%%%%%%%%%%%%%%%%%%%%%

\begin{figure*}[!t]
    \centering
    \subfloat[C\&S sum rate vs. $p_s^{\max}$\label{fig:cs_vs_ps}]
    {\includegraphics[width=0.33\textwidth]{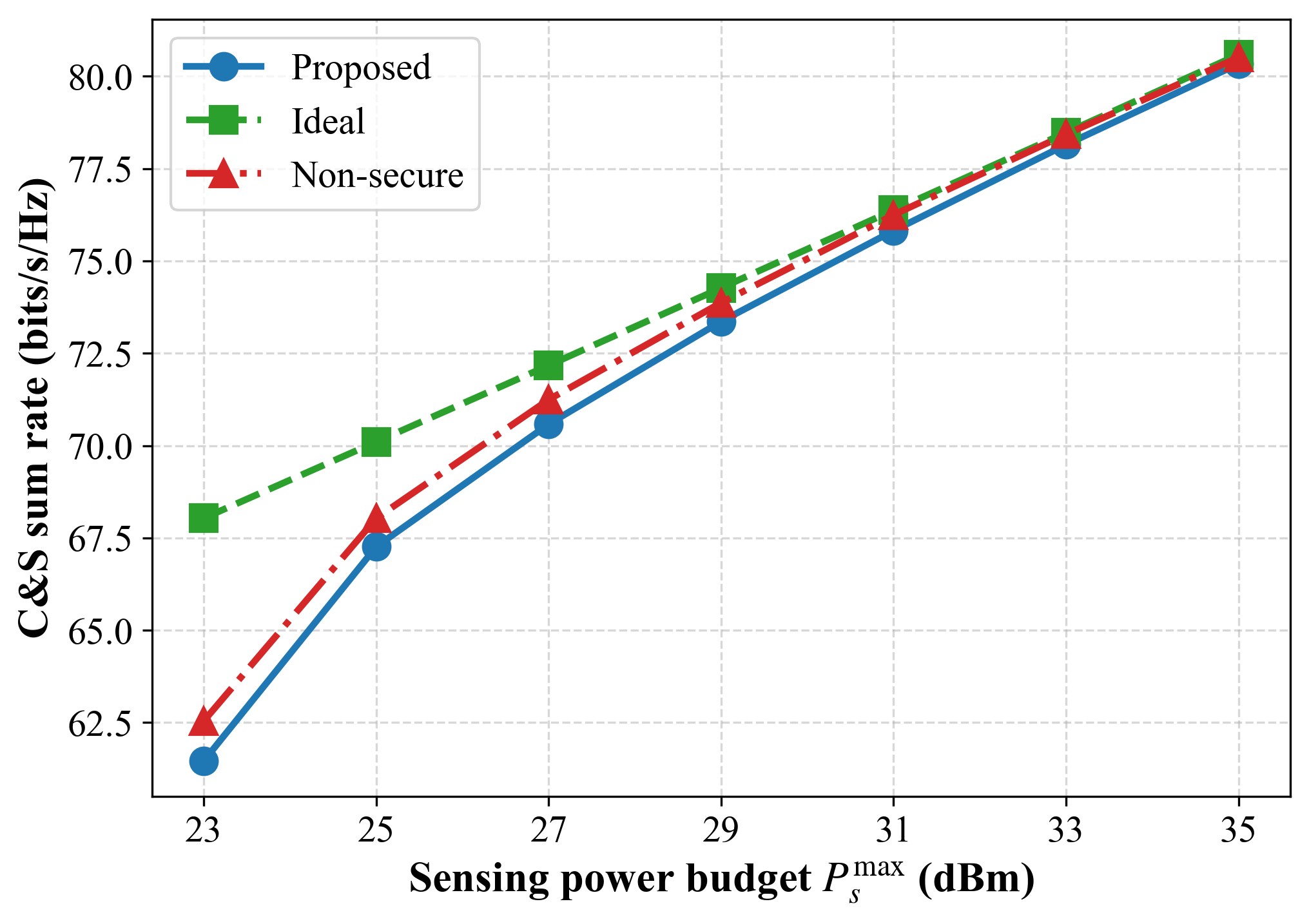}}
    \hfill
    \subfloat[Sensing rate vs. $p_s^{\max}$\label{fig:sensing_vs_ps}]
    {\includegraphics[width=0.33\textwidth]{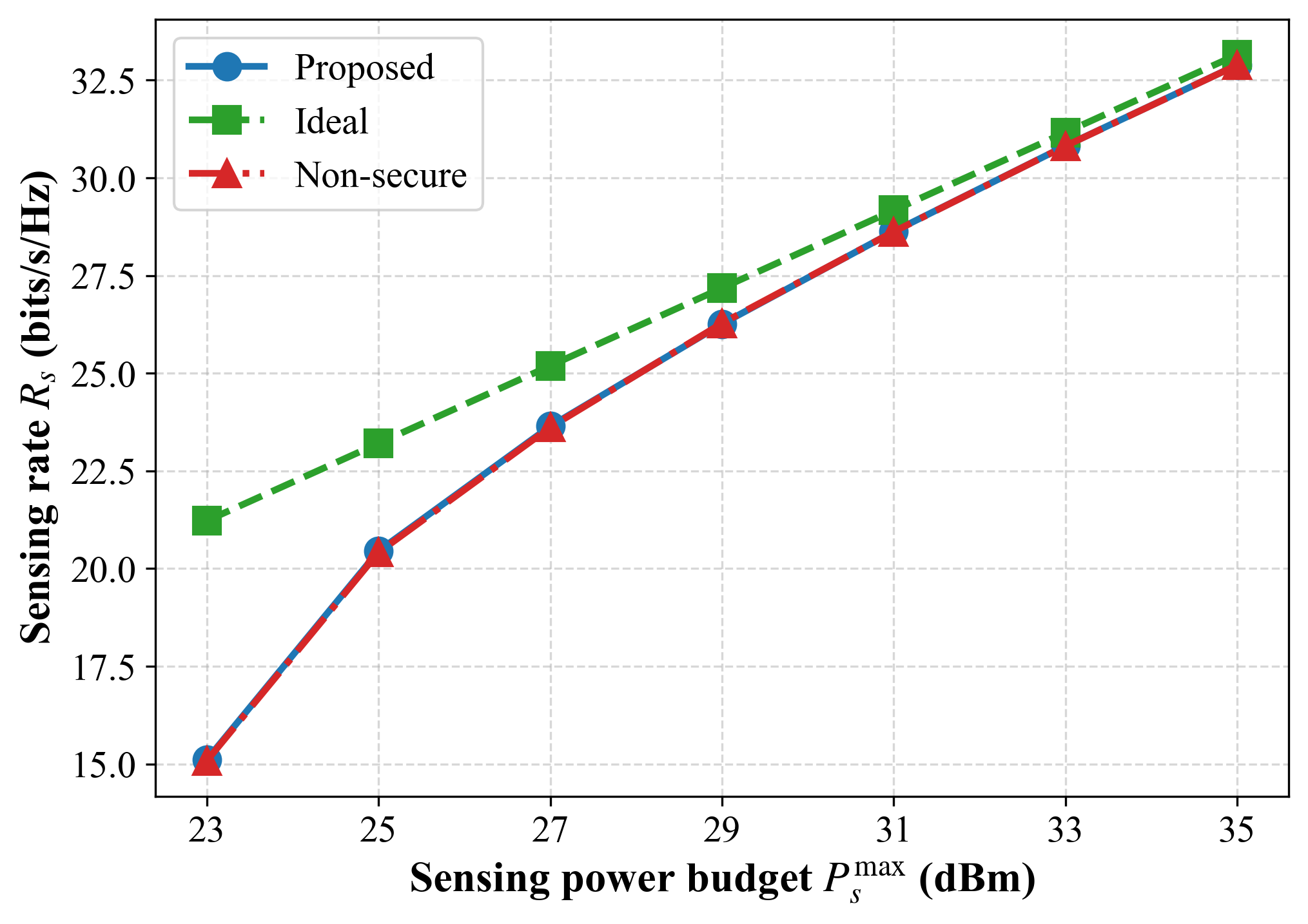}}
    \hfill
    \subfloat[Communication rate vs. $p_s^{\max}$\label{fig:comm_vs_ps}]
    {\includegraphics[width=0.33\textwidth]{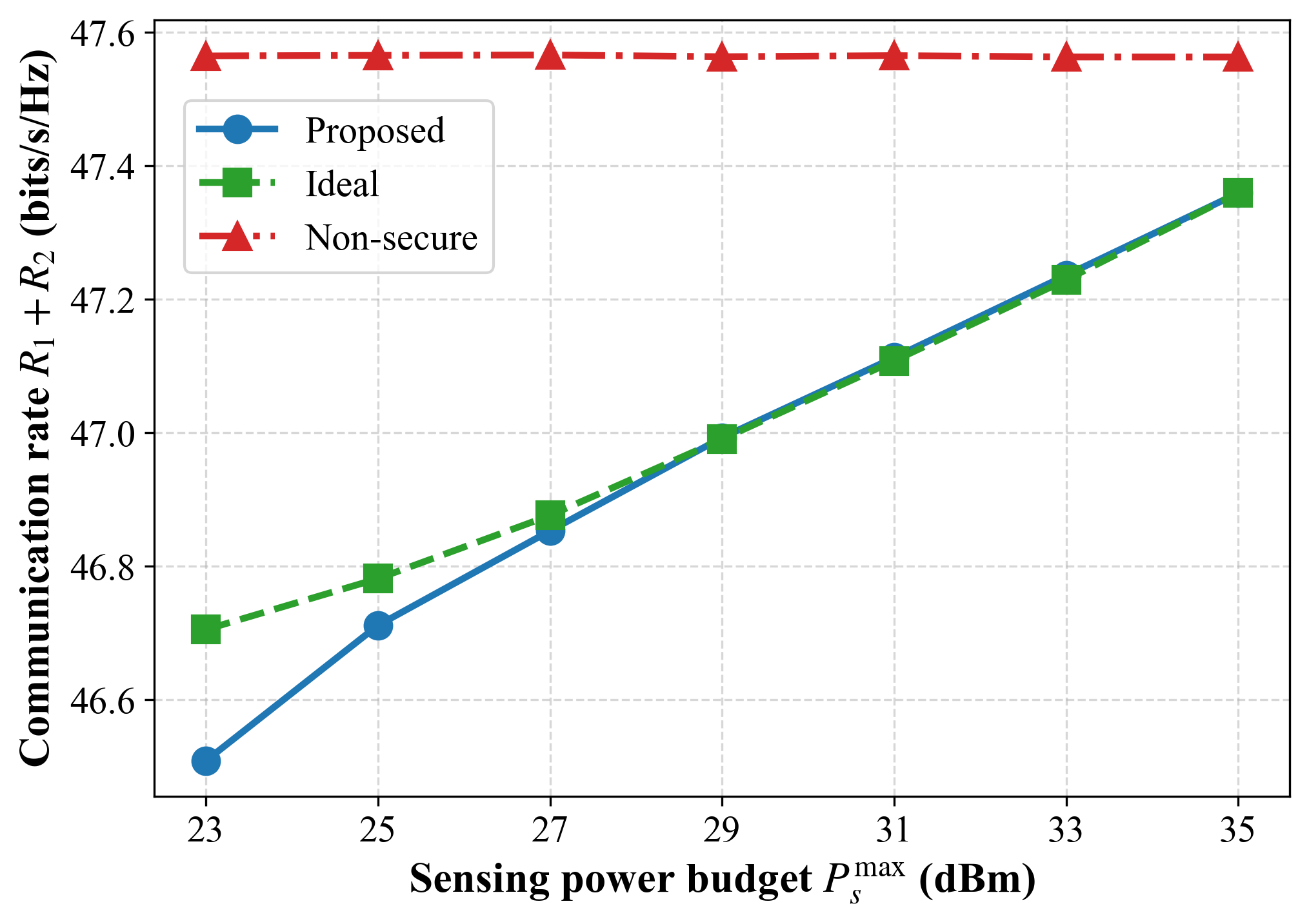}}
    \caption{Performance comparison vs. sensing power budget $P_s^{\max}$.}
    \label{fig:performance_vs_ps_max}
\end{figure*}

Fig.~\ref{fig:performance_vs_ps_max} illustrates the effect of the sensing power budget $P_s^{\max}$ on the three performance metrics. Increasing $P_s^{\max}$ enlarges the feasible interval of $p_s$ in~\eqref{cst3}. Hence, the optimizer has more flexibility to improve the performance. As shown in Fig.~\ref{fig:cs_vs_ps}, the \ac{CS} sum rate increases with $P_s^{\max}$. This improvement is mainly driven by the sensing-rate component, since a larger $p_s$ strengthens the useful sensing signal in~\eqref{SINR_sensing_final}. Compared with the proposed algorithm, the ideal approach achieves the largest \ac{CS} sum rate since it assumes perfect knowledge of Eve's location, while the non-secure approach benefits from removing Eve's SINR constraint and thus has more available degrees of freedom. Fig.~\ref{fig:sensing_vs_ps} shows that the sensing rate increases with $P_s^{\max}$, indicating that the improvement due to better illumination of the target offsets the increased SI in~\eqref{SINR_sensing_final}. In Fig.~\ref{fig:comm_vs_ps}, the communication rate changes only slightly as $P_s^{\max}$ increases, since $p_s$ appears in the denominators of the user SINRs in~\eqref{eu_SINR_user1_xN} and~\eqref{eu_SINR_user2_xN} through the residual SI terms. Therefore, additional sensing power does not directly improve the communication links and may even increase interference at the BS. The non-secure scheme achieves the highest communication rate because Eve's SINR constraint is inactive. The proposed and ideal methods overlap when the sensing power budget is large enough, since increasing the sensing power eventually reduces the error in localizing Eve to near zero.

%%%%%%%%%%%%%%%%%%%%%%%%%%%%%%%%%%%%%%%%%%%%%
\begin{figure}
    \centering
      \includegraphics[width=\columnwidth]{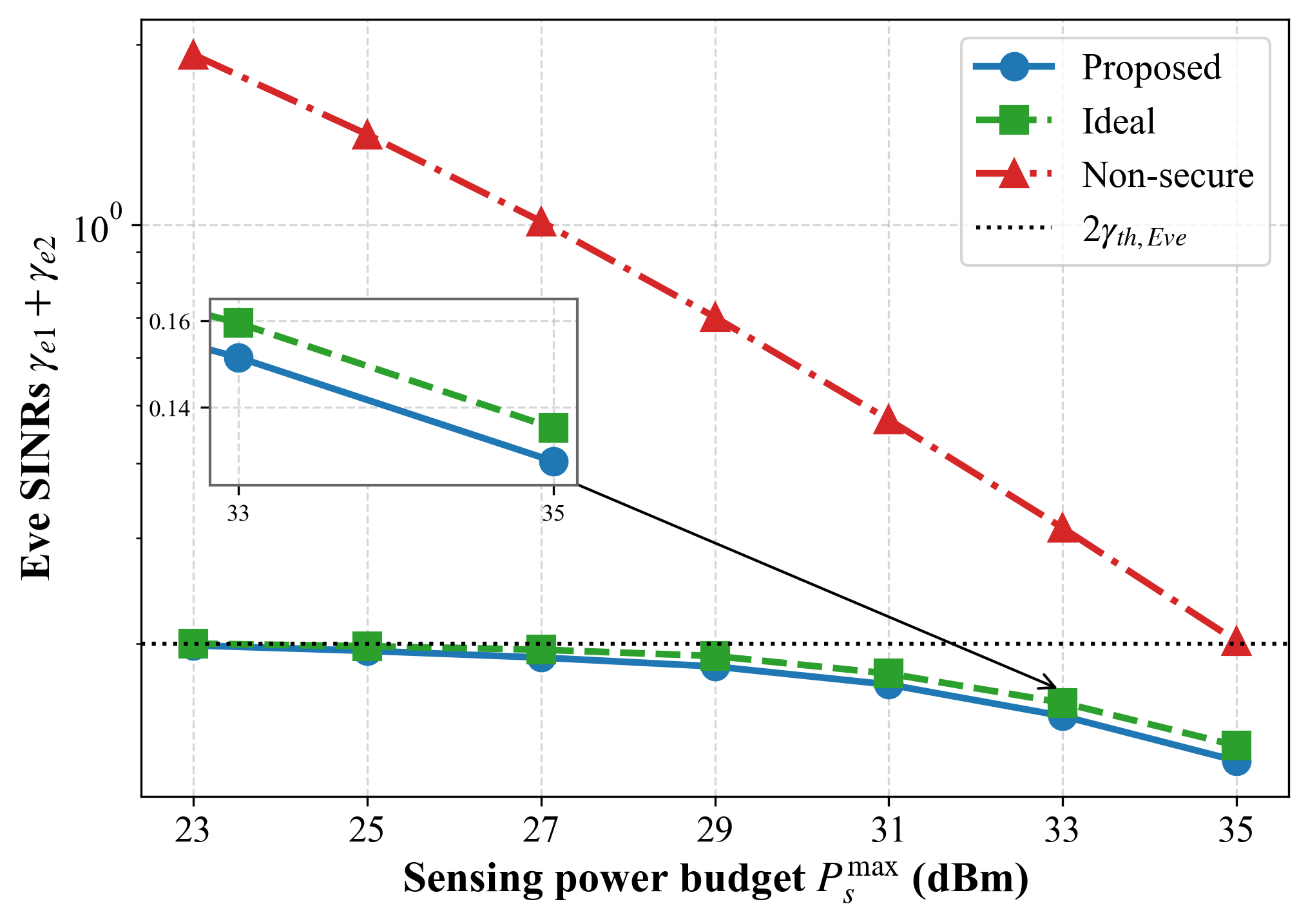}
    \caption{Eve's SINR vs. $P_s^{\max}$}
    \label{fig:eve_sinr_vs_ps}
\end{figure}

Fig.~\ref{fig:eve_sinr_vs_ps} shows the sum of SINRs at Eve,
$\gamma_e=\gamma_{e,1}+\gamma_{e,2}$, vs. $P_s^{\max}$. Since the SINR of each decoded signal at Eve is constrained by $\gamma_{\mathrm{th}}$, the dotted horizontal line represents the corresponding aggregate threshold $2\gamma_{\mathrm{th}, Eve}$. The proposed and ideal schemes keep Eve's SINR below this threshold over the entire range of sensing powers, confirming that the secrecy constraint is satisfied. The ideal approach operates slightly closer to the threshold, whereas the proposed algorithm leads to a lower SINR at Eve due to the additional uncertainty margin introduced for robust secrecy protection. As $P_s^{\max}$ increases, Eve's SINR under the proposed and ideal schemes slightly decreases. This behavior is consistent with~\eqref{eu_case2_Eve_xNew}, where the sensing-related term $p_s q_b$ appears in the denominator of Eve's SINR and helps suppress Eve's effective reception. In contrast, the non-secure scheme does not impose any constraint on Eve's SINR, and thus Eve's performance remains well above the threshold for most values of $P_s^{\max}$. Although Eve's SINR decreases as the sensing power budget increases, it only approaches the threshold at high power levels. This confirms that the non-secure design cannot guarantee secrecy, while the proposed scheme maintains robust secrecy with a small margin compared with the ideal benchmark.

%%%%%%%%%%%%%%%%%%%%%%%%%%%%%%%%%%%%%%%%%%%%%

\begin{figure*}[!t]
    \centering
    \subfloat[Secrecy rate vs. $P_s^{\max}$\label{fig:secrecy_vs_ps}]
    {\includegraphics[width=0.48\textwidth]{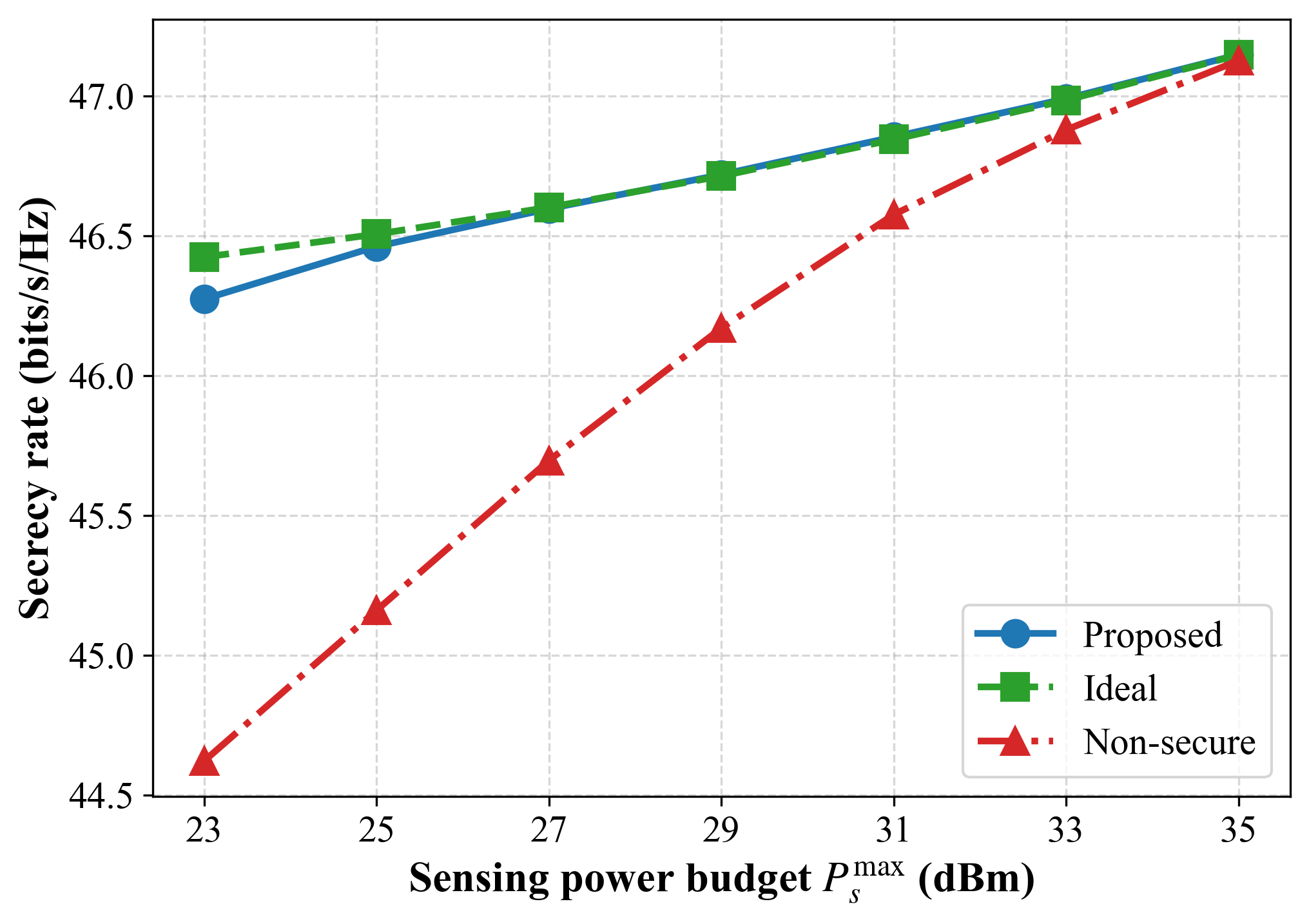}}
    \hfill
    \subfloat[Secrecy rate vs. BS--Eve distance\label{fig:secrecy_vs_d}]
    {\includegraphics[width=0.48\textwidth]{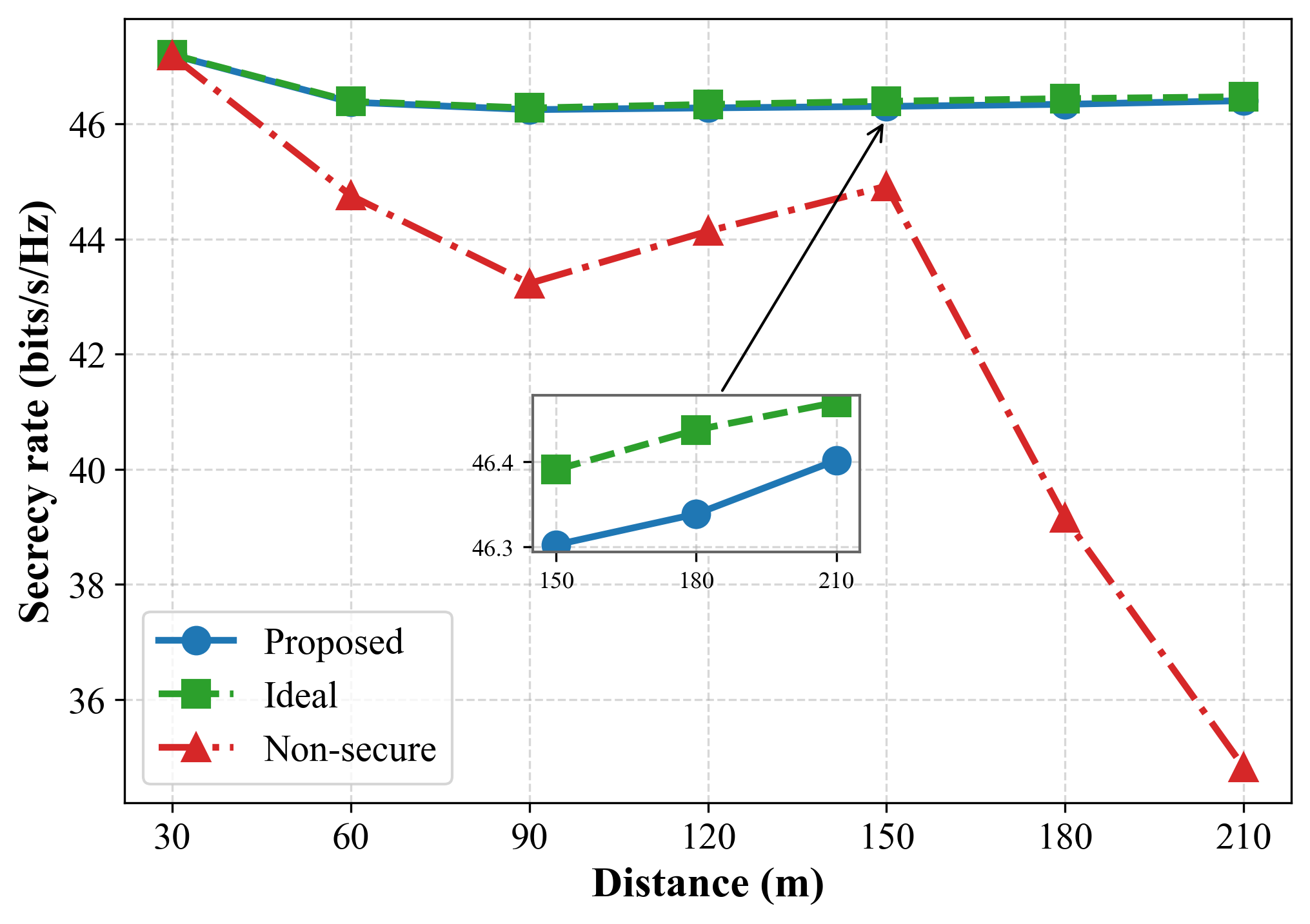}}
    \caption{Secrecy-rate performance comparison: (a) vs. $P_s^{\max}$ and (b) vs. BS--Eve distance.}
    \label{fig:secrecy_performance}
\end{figure*}

Fig.~\ref{fig:secrecy_performance} presents the secrecy-rate 
performance vs. $P_s^{\max}$ \qquad and BS--Eve distance. The secrecy rate is defined as
\begin{align}
R_{\mathrm{sec}} = \sum_{k=1}^{2} a_k \left[ \log_2\left(1+\gamma_k\right) - \log_2\left(1+\gamma_{e,k}\right) \right]^+.
\end{align}
As shown in Fig.~\ref{fig:secrecy_vs_ps}, the secrecy rate increases with $P_s^{\max}$ for all schemes, since a larger sensing power provides additional flexibility for joint resource allocation and reduces Eve's effective SINR through the sensing-related term in~\eqref{eu_case2_Eve_xNew}. The proposed and ideal algorithms achieve similar secrecy rates since both satisfy Eve's SINR constraint. The small gap between them is due to the uncertainty margin in the proposed design, which makes it slightly more conservative than the ideal benchmark. In Fig.~\ref{fig:secrecy_vs_d}, we see similar and essentially flat behavior for both the proposed and ideal algorithms as the BS and Eve move apart. Both secure designs effectively control information leakage toward Eve at a fixed level, despite the fact that there is increasing uncertainty in Eve's location due to the increased sensing distance between them. In contrast, the non-secure scheme degrades significantly beyond a BS--Eve distance of $150$ m. The non-secure scheme, which does not explicitly constrain Eve's SINR, cannot reliably exploit this trade-off, and its secrecy rate collapses at large distances where uncertainty is highest. These results confirm that the proposed design provides robust secrecy performance with only a small loss compared with the ideal secure benchmark.

% %%%%%%%%%%%%%%%%%%%%%%%%%%%%%%%%%%%%%%%%%%%%%

\section{Conclusion} \label{sec:Conclusion}
We investigated a secure uplink PD-NOMA ISAC system in the presence of an eavesdropping target with uncertain location. To account for the sensing uncertainty, a CRB-informed framework was developed to characterize how angle and delay estimation errors propagate into the assumed communication, sensing, and eavesdropping channel models. Based on this uncertainty-aware model, a joint beamforming and sensing-power optimization problem was formulated to maximize a weighted communication--sensing utility subject to secrecy constraints, and an AO-SCA algorithm was proposed to obtain an efficient solution. 
Simulation results confirm that the proposed robust-secure design reliably maintains Eve's SINR below the prescribed threshold under location uncertainty, while maintaining strong sensing performance and competitive communication rates. These results demonstrate that CRB-aware robust design is an effective and practically relevant approach for secure uplink NOMA-ISAC, and that ignoring sensing-induced uncertainty can lead to overly optimistic and potentially insecure system designs. A natural direction for future work is to extend the framework to mobile eavesdroppers and time-varying uncertainty models.

%%%%%%%%%%%%%%%%%%%%%%%%%%%%%%%%%%%%%%%%%%%%%%%%%%%%%%%%%%%%%
%------------------------  Appendices ----------------------%
%%%%%%%%%%%%%%%%%%%%%%%%%%%%%%%%%%%%%%%%%%%%%%%%%%%%%%%%%%%%%

\appendices
\numberwithin{equation}{section}
\makeatletter 
\newcommand{\section@cntformat}{Appendix \thesection:\ }
\makeatother

\section{Proof of Lemma \ref{lemma1}}\label{AppendixA}

Substituting \eqref{eu_radar:sig} into \eqref{eu_ys}, the sensing echo is rewritten as 
\begin{align} \label{eu:sensing}
    \ybf_\text{s}(t)
   = \sqrt{p_s}\,\beta_0 \Abf(\theta_b)\,\abf(\hat{\theta}_b)
    q(t-2\tau_b).
\end{align}
Assuming $2\Delta\tau_b = 2(\tau_b-\hat{\tau}_b) \ll T$, we write
$\ybf_\text{s}$ as
\begin{align}  \nonumber
     \tilde\ybf_\text{s} &=\int_0^{LT} \ybf_\text{s}(t) q^*(t-2\hat{\tau}_b) \,dt,\\ \nonumber
     &=  \sqrt{p_s}\,\beta_0 \Abf(\theta_b)\,\abf(\hat\theta_b) \int_0^{LT} q(t-2\tau_b)q^*(t-2\hat{\tau}_b)\,dt, \\ \nonumber
     &=  \sqrt{p_s}\, \beta_0 \Abf(\theta_b)\,\abf(\hat\theta_b) \sum_{m=0}^{L-1} \vert a_m \vert^2 R_\phi(2\Delta\tau_b), \\ \label{eu:tilde_ys}
    &= \sqrt{p_s}\,\beta_0 L \Abf(\theta_b) \abf(\hat\theta_b) R_\phi(2\Delta\tau_b).
\end{align}
Using the second-order Taylor expansion of $R_\phi(\tau)$ around $\tau=0$, we have
\begin{align}  \label{eu:Taylor}
 &R_\phi (2\Delta\tau_b) \approx \underbrace{R_\phi(0)}_{=1}+ \underbrace{\frac{\partial R_\phi}{\partial\tau}{\bigg\rvert_{\tau=0}}2\Delta\tau_b}_{=0} + \frac{\partial^2R_\phi}{\partial\tau^2}{\bigg\rvert_{\tau=0}}\frac{(2\Delta\tau_b)^2}{2}
\end{align}
Therefore, substituting \eqref{eu:Taylor} into \eqref{eu:tilde_ys} yields
\begin{align}\label{eu:y_s_MF} 
\tilde{\ybf}_\text{s}
&\approx \sqrt{p_s} \, \beta_0 L \Abf(\theta_b)\,\abf(\hat\theta_b)  \Big( 1 + R''_\phi(0) 2\Delta\tau_b^2\Big)
\end{align} 

Similarly, the SI signal \eqref{eu_SI} can be expressed as
\begin{align} \label{eu:SI}
    \ybf_\text{\tiny SI}(t) = \sqrt{p_s} \, \Hbf_\text{\tiny SI} \abf(\hat\theta_b) q(t).
\end{align}
Applying the MF over $L$ samples on \eqref{eu:SI} yields
\begin{align} \nonumber
  \tilde \ybf_\text{\tiny SI} &= \int_0^{LT} \ybf_\text{\tiny SI}(t) q^*(t-2\hat{\tau}_b) \,dt,\\
  &=\sqrt{p_s} \, \Hbf_\text{\tiny SI} \abf(\hat\theta_b)\int_0^{LT} q(t) q^*(t-2\hat\tau_b) \,dt\\
  & = \sqrt{p_s} \, \Hbf_\text{\tiny SI} \abf(\hat\theta_b) r_q(2\hat{\tau}_b),
\end{align}
where
\begin{equation}
    r_q (2\hat{\tau}_b) = \sum_{\ell=0}^{L-1} \sum_{m=0}^{L-1} a_\ell a^*_m R_\phi(({\ell}-m)T+2\hat{\tau}_b).
\end{equation}

Finally, the covariance of the filtered noise $\tilde\nbf_\text{b}$ is calculated as \cite{FNE2024}
\begin{align} \nonumber
\mathbb{E}\{ \tilde\nbf_\text{b} \tilde\nbf_\text{b}^{H}\}& = \\ \nonumber
&\int_0^{LT} \!\!\int_0^{LT}  \mathbb{E}[\nbf_{\mathrm{b}}(t_1)\nbf_{\mathrm{b}}^{H}(t_2)] q(t_1) q^*(t_2-2\hat\tau_b) \, dt_1 dt_2  \nonumber\\
&= \int_0^{LT} \!\!\int_0^{LT}  \sigma^2_b \delta(t_2 - t_1)\Ibf \, q(t_1) q^*(t_2-2\hat\tau_b)\, dt_1 dt_2 \nonumber\\
&= \sigma^2_b \Ibf \int_0^{LT} \, \vert q(t-2\hat\tau_b) \vert^2 \,dt = \sigma^2_b E_q \Ibf,
\end{align} \label{filterednoise}
where $E_q \triangleq \int_0^{LT} \, \vert q(t-2\hat\tau_b) \vert^2 \,dt = \int_0^{LT} |q(t)|^2 \,dt = 1$ is the energy of the matched-filter.

\section{Proof of Lemma \ref{lemma2}}\label{AppendixB}
We need to calculate the three expectations in \eqref{gammas}. As such, using Lemma \ref{lemma1}, we first calculate 
$\mathbb{E}\{ \vert \vbf_s^H \tilde{\ybf}_\text{s} \vert^2\}$ as
\begin{align} 
    &\mathbb{E}\{ \vert \vbf_s^H \tilde{\ybf}_\text{s} \vert^2\} = \mathbb{E}\{ \vert \beta_0 \vbf_s^H  \sqrt{p_s} L \Abf(\theta_b)\abf(\hat\theta_b) Q_\phi \vert^2 \} \nonumber\\ &\stackrel{(a)}{\approx} \mathbb{E}\Big\{ 
    \underbrace{\Big\vert \beta_0 \vbf_s^H  \sqrt{p_s} L \Abf(\theta_b) \Big[\abf(\theta_b) + \abf'_{\theta_b}\Delta\theta_b
    + \frac{1}{2}\abf''_{\theta_b}\Delta\theta_b^2 \Big] Q_\phi \Big\vert^2}_{=I} \Big\},
    \label{eu:expected_s}
\end{align}
where $Q_\phi \triangleq 1+2R''_\phi(0)(\Delta\tau_b)^2$,
$\abf'_{\theta_b}\triangleq 
\frac{\partial\abf(\theta)}{\partial\theta}{\big\rvert_{\theta=\theta_b}}$, 
$\abf''_{\theta_b}\triangleq 
\frac{\partial^2\abf(\theta)}{\partial\theta^2}{\big\rvert_{\theta=\theta_b}}$, and $(a)$ follows from the second-order Taylor approximation of $\abf(\hat\theta_b)$ around $\theta_b$, with $\Delta\theta_b \triangleq \hat\theta_b-\theta_b$. 

Expanding the squared magnitude in \eqref{eu:expected_s}, we obtain
\begin{align}
\begin{split}
I &\approx \vert \beta_0 \vert^2 p_s L^2 \vbf_s^H \Big(\Big[\Abf(\theta_b)\abf(\theta_b)+ \Abf(\theta_b)\abf(\theta_b)2R''_\phi(0)\Delta\tau_b^2\\ &\qquad
+ \Abf(\theta_b)\abf'_{\theta_b}\Delta\theta_b
+ \frac{1}{2}\Abf(\theta_b)\abf''_{\theta_b}\Delta\theta_b^2\Big]\\ &\qquad\times \Big[\abf(\theta_b)^H\Abf(\theta_b)^H
+ \abf(\theta_b)^H\Abf(\theta_b)^H2R''_\phi(0)\Delta\tau_b^2\\ &\qquad
+ \abf_{\theta_b}'^{H}\Abf(\theta_b)^H\Delta\theta_b
+ \frac{1}{2}\abf_{\theta_b}''^{H}\Abf(\theta_b)^H\Delta\theta_b^2\Big]\Big)\vbf_s .
\end{split}
\end{align}
Taking the expectation and retaining terms up to 
$\mathcal{O}(\sigma_{\theta_b}^2)$ and 
$\mathcal{O}(\sigma_{\tau_b}^2)$, we obtain
\begin{align} \label{eu_final_sensing}
    \mathbb{E}\{ \vert \vbf_s^H \tilde{\ybf}_\text{s} \vert^2\}  = \vert \beta_0 \vert^2 p_s L^2 ( \vbf_s^H \Qbf_s \vbf_s),
\end{align}
where $\Qbf_s$ is given in \eqref{eu_expected_c}. In deriving \eqref{eu_expected_c}, we assume zero-mean delay and angle estimation errors, and that variances are approximated by the corresponding \ac{CRB} bounds, i.e., $\sigma_{\tau_b}^2=\mathrm{CRB}(\tau_b)$ and 
$\sigma_{\theta_b}^2=\mathrm{CRB}(\theta_b)$, respectively. The covariance terms are retained up to $\mathcal{O}(\sigma_{\theta_b}^2)$ and $\mathcal{O}(\sigma_{\tau_b}^2)$, while higher-order terms, including 
$\mathcal{O}(\sigma_{\theta_b}^4)$, $\mathcal{O}(\sigma_{\tau_b}^4)$, and cross terms such as 
$\mathcal{O}(\sigma_{\theta_b}^2\sigma_{\tau_b}^2)$, are neglected.

Next we calculate $\mathbb{E}\{ \vert \vbf_s^H \tilde\ybf_{\text{\tiny SI}} \vert^2 \}$ using Lemma \ref{lemma1} as
\begin{align} \nonumber
    \mathbb{E}\{ \vert \vbf_s^H \tilde\ybf_{\text{\tiny SI}} \vert^2 \} &= \mathbb{E}\{ \vert \vbf_s^H \sqrt{p_s}\, r_q(2\hat\tau_b) \, \Hbf_\text{\tiny SI}\abf(\hat\theta_b) \vert^2 \} \\ \label{den1}
    &\stackrel{(a)}{\approx} p_s |c(\ell'(\tau_b))|^2 \, ( \vbf_s^H \Qbf_{\text{\tiny SI}} \vbf_s) ,
\end{align}
where $\Qbf_{\text{\tiny SI}} \triangleq \Hbf_\text{\tiny SI} \abf(\theta_b) \abf(\theta_b)^H \Hbf_\text{\tiny SI}^H$ and $(a)$ follows from the assumption 
$\bigl\|\Hbf_{\text{\tiny SI}}\abf{(\hat\theta_b)}\bigr\| \approx \bigl\|\Hbf_{\text{\tiny SI}}\abf(\theta_b)\bigr\|$. Finally, the last expectation is already calculated in Lemma \ref{lemma1}. Thus, substituting \eqref{eu_final_sensing}, \eqref{den1}, and \eqref{filterednoise} into \eqref{gammas} leads to the sensing SINR expression in Lemma \ref{lemma2}, completing the proof.

\section{FIM Matrix calculations} \label{AppendixC}

This appendix presents the individual entries of the \ac{FIM} associated with the parameter vector $\zetabf$ defined in \ref{subsec:crb}. Specifically, the \ac{FIM} characterizes the information available for jointly estimating the target angle $\theta_b$, delay $\tau_b$, and the real and imaginary components, $\beta_r$ and $\beta_i$, of the complex target \ac{RCS} coefficient $\beta_0$. The resulting \ac{FIM} entries are given by

\begin{subequations}
    \begin{align} 
    \label{theta, theta}
     f_{\theta_b \theta_b} & = 2 \vert \beta_0 \vert^2 \tr (\mathbf{C}_b^{-1}\dot\Abf_L \chibf \chibf^H \dot \Abf^H_L),\\ 
    \label{theta, tau}
    f_{\theta_b \tau_b}  &= 2 \vert \beta_0 \vert^2 \Re \Big\{ \chibf^H \dot\Abf^H _L\mathbf{C}_b^{-1}\Abf_L(\theta_b)\dot\chibf\Big\},\\ 
    \label{theta, beta_r}
    f_{\theta_b \beta_r} &= 2 \Re \Big\{ \beta_0^* \tr (\mathbf{C}_b^{-1}\dot\Abf^H_L \chibf  \chibf^H \Abf_L(\theta_b)) \Big\} , \\ \label{theta, beta_i}
    f_{\theta_b \beta_i} &=2 \Re \Big \{ j {\beta}_0^* \tr (\mathbf{C}_b^{-1}\dot\Abf^H_L \chibf \chibf^H \Abf_L(\theta_b))\Big\}, \\  
    \label{tau, theta}
    f_{\tau_b \theta_b}  &= 2 \vert \beta_0 \vert^2 \Re \Big\{ \dot\chibf^H \Abf_L(\theta_b)^H\mathbf{C}_b^{-1}\dot\Abf_L\chibf \Big\}, \\ 
    \label{tau, tau}
    f_{\tau_b \tau_b}   &= 2 \vert \beta_0 \vert^2 \tr \big(\mathbf{C}_b^{-1}\Abf_L(\theta_b)^H \dot\chibf \dot\chibf^H \Abf_L(\theta_b)\big ), \\  
    \label{tau, beta_r}
     f_{\tau_b \beta_r} &= 2 \Re \Big\{ \beta_0^* (\dot\chibf^H \Abf_L(\theta_b)^H \mathbf{C}_b^{-1}\Abf_L(\theta_b)\chibf) \Big\}, \\   
     \label{tau, beta_i}
     f_{\tau_b \beta_i } &= 2 \Re \Big\{ j \beta_0^* (\dot\chibf^H \Abf_L(\theta_b)^H \mathbf{C}_b^{-1} \Abf_L(\theta_b)\chibf) \Big\}, \\ 
     \label{beta_r, theta}
      f_{\beta_r \theta_b} &= 2 \Re \Big\{ \beta_0^* \tr (\mathbf{C}_b^{-1}\Abf_L(\theta_b)^H\chibf  \chibf^H \dot\Abf_L) \Big\} ,\\ \label{beta_r, tau}
     f_{\beta_r \tau_b} &= 2\Re \Big\{ \beta_0^* (\dot\chibf^H \Abf_L(\theta_b)^H \mathbf{C}_b^{-1}\Abf_L(\theta_b)\chibf) \Big\},\\  
      \label{beta_r, beta_r}
      f_{\beta_r \beta_r} &= f_{\beta_i \beta_i} =2\tr \big(\mathbf{C}_b^{-1}\Abf_L(\theta_b) \chibf \chibf^H \Abf_L(\theta_b)^H\big),\\ \label{beta_i, theta}
     f_{\beta_i \theta_b} &= 2\Re \Big\{j\beta_0^*  \tr(\mathbf{C}_b^{-1}\Abf_L(\theta_b)^H\chibf\chibf^H\dot\Abf_L \Big\}, \\ \label{beta_i, tau}
     f_{\beta_i \tau_b } & = 2\Re \Big\{j\beta_0^* \chibf^H \Abf_L(\theta_b)^H \mathbf{C}_b^{-1}\Abf_L(\theta_b)\dot\chibf
    \Big\}, \\ 
    \label{beta_r, beta_i}
     f_{\beta_r \beta_i} &= f_{\beta_i \beta_r} = 0.
    \end{align}
\end{subequations}

\section{Proof of Eve's SINR} \label{AppendixD}
\subsection{Calculation of actual channel between BS and Eve}
We define the true channel from BS to Eve in \eqref{eu_truechannel_b},
where the steering vector is 
\begin{align} \label{eu_theta_b}
\abf(\theta_b) &= \left[ 1 \; \; \exp(j2\pi\delta \sin(\theta_b)/\lambda) \; \cdots \right. \nonumber\\ 
 & \qquad \left. \exp(j2\pi\delta (N-1)\sin(\theta_b)/\lambda) \right]^T \;.
\end{align}
The derivative of $\tilde\gbf_{b}$ w.r.t. $\theta_{b}$ based on \eqref{eu_theta_b} is
\begin{align} \nonumber
     \frac{\partial \abf(\theta_b) }{\partial \theta_b }&= [0, j\frac{2\pi \delta}{\lambda} \cos\theta_b e^{j\frac{2\pi \delta}{\lambda}\sin\theta_b},\dots,\\ \label{eu_partial_Theta_b}
   &  {j\frac{2\pi \delta}{\lambda}}(N-1)\cos{\theta_b}e^{j \frac{2\pi \delta}{\lambda}{(N-1) \sin \theta_b}}]
\end{align}

The derivative of $\tilde{\mathbf{g}}_{b}$ with respect to $\tau_b$ based on \eqref{eu:alpha_b} is
\begin{align} \label{eu_partial_tau_b}
\frac{\partial\alpha_1(\tau_b)}{\partial\tau_b}
= -\frac{\lambda}{4\pi c \tau_b^2}
= -\frac{c\lambda}{4\pi d_b^2},
\end{align}
Let $\frac{\partial \alpha_1(\tau_{b})}{\partial {\tau_{b}}}=\alpha'_{\tau_b}$ and $\frac{\partial \abf(\theta_b) }{\partial \theta_b }= \abf'_{\theta_b}$.

Now we can calculate  
\begin{align} \nonumber
    & \mathbb{E}\{ \vert\tilde{\gbf}_{b}^H \xbf \vert^2 \} = \mathbb{E}\{\vert \sqrt{p_s}
    \alpha_1(\hat\tau_b) \abf(\hat\theta_b)^H \abf(\hat\theta_b) \, q(t) \\ \nonumber 
    & \qquad\qquad + \sqrt{p_s}\, \frac{\partial\alpha_1(\tau)}{\partial\tau}{\bigg\rvert_{\tau=\hat\tau_b}}\Delta\tau_b \abf(\hat\theta_b)^H \abf(\hat\theta_b) \, q(t) \\ \label{eu_expected_g1}  & \qquad \qquad +\sqrt{p_s}\, \alpha_1(\hat\tau_b)\abf(\hat\theta_b)^H\frac{\partial\abf(\theta)}
    {\partial\theta}{\bigg\rvert_{\theta=\hat\theta_b}} \Delta\theta_b \, q(t) \vert^2 \} 
\end{align}
Expanding the squared magnitude in \eqref{eu_expected_g1}, we obtain
\begin{align} \nonumber
    & p_s  \big( \alpha_1(\hat\tau_b)^2 \|\abf(\hat\theta_b)\|^4 + \sigma^2_\tau \alpha_{\tau_b}'^2 \|\abf(\hat\theta_b)\|^4 \\ \nonumber
    & \quad + \sigma_{(\tau,\theta)} \alpha_{\tau_b}' \alpha_1(\hat\tau_b)^H \abf_{\theta_b}'^H \abf(\hat\theta_b)   \|\abf(\hat\theta_b)\|^2 \\ \nonumber
    & \quad + \sigma_{(\theta,\tau)} \alpha_{\tau_b}'^H \alpha_1(\hat\tau_b) \abf(\hat\theta_b)^H \abf_{\theta_b}' \|\abf(\hat\theta_b)\|^2  \\ \label{eu_final_g_b}
    & \quad + \sigma^2_\theta \alpha_1(\hat\tau_b)^2 \vert \abf(\hat\theta_b)^H \abf'_{\theta_b} \vert^2  \big)
\end{align}
Hence, $\mathbb{E}\{ \vert\tilde{\gbf}_{b}^H \xbf \vert^2 \} = p_s q_b$, where $q_b$ is defined in \eqref{eu_qb}.

\subsection{Calculation of actual channel between users and Eve }

The true channel from UE$_k$ to Eve is defined in \eqref{eu:truechannel_uk}, where $\abf(\theta_{u_k})$ is the array response vector for user $k$:
\begin{align} \nonumber
   \abf(\theta_{u_k}) &= \left[ 1 \; \; \exp(j2\pi\delta\sin(\theta_u)/\lambda) \; \cdots \right. \\ & \qquad \left. \label{eu_theta_u}
   \exp(j2\pi\delta (M_u-1)\sin(\theta_u)/\lambda) \right]^T \;.
\end{align}
Based on \eqref{eu_theta_u} we have:
\begin{align} \nonumber
     \frac{\partial \abf(\theta_{u_k}) }{\partial \theta_{u_k}} &= [0, j\frac{2\pi \delta}{\lambda} \cos\theta_u e^{j\frac{2\pi \delta}{\lambda}\sin\theta_u},\dots,\\ \label{eu_partial_theta_u}
   &  {j\frac{2\pi \delta}{\lambda}}(M_u-1)\cos{\theta_u}e^{j \frac{2\pi \delta}{\lambda}{(M_u-1) \sin \theta_u}}] ,
\end{align}
where $\theta_{u_k} = \arcsin \frac{z_e}{\|\dbf_k - \dbf_e\|}$, and $z_e$ is Eve's altitude.
 
The derivative of $\gbf_{u_k}$ with respect to $\tau_{u_k}$ based on \eqref{eu:alpha_u} is
\begin{align} \label{eu_partial_tau_u}
   \frac{\partial\alpha_1(\tau_{u_k})}{\partial\tau_{u_k}}
   = -\frac{\lambda}{4\pi c \tau_{u_k}^2}
   = -\frac{c\lambda}{4\pi d_{u_k}^2}.
\end{align}
Let $\frac{\partial \alpha_1(\tau_{u_k})}{\partial {\tau_{u_k}}}=\alpha'_{\tau_u}$ and $\frac{\partial \abf(\theta_{u_k}) }{\partial \theta_{u_k} }= \abf'_{\theta_u}$. Then we have
\begin{align} \nonumber
    \mathbb{E}\{ \vert  \tilde{\gbf}_{u_k}^H \tilde\wbf_k \vert^2 \}& =  \tilde\wbf_k^H \mathbb{E}
\{  \tilde{\gbf}_{u_k} \tilde{\gbf}_{u_k}^H  \} \tilde\wbf_k \\ \nonumber
    &  =  \alpha_1(\hat\tau_{u_k})^2 \tilde\wbf_k^H \abf(\hat\theta_{u_k}) \abf(\hat\theta_{u_k})^H \tilde\wbf_k \\ \nonumber
    &\quad +\sigma^2_{\theta} \alpha_1(\hat\tau_{u_k})^2 \tilde\wbf_k^H \abf'_{\theta_u} \abf'^{H}_{\theta_u}  \tilde\wbf_k \\ \nonumber
   & \quad + \sigma_{(\tau,\theta)}\alpha_1(\hat\tau_{u_k}) \alpha'_{\tau_u} \tilde\wbf_k^H \abf(\hat\theta_{u_k})\abf'^H_{\theta_u} \tilde\wbf_k \\ \nonumber
   &\quad+ \sigma_{(\tau, \theta)}\alpha'_{\tau_u} \alpha_1(\hat\tau_{u_k})\tilde\wbf_k^H \abf'_{\theta_u} \abf(\hat\theta_{u_k})^H\tilde\wbf_k\\ 
   &\quad  + \sigma^2_{\tau} \alpha'^2_{\tau_u} \tilde\wbf_k^H  \abf(\hat\theta_{u_k}) \abf(\hat\theta_{u_k})^H\tilde\wbf_k.
\end{align}

\ifCLASSOPTIONcaptionsoff
  \newpage
\fi

%%%%%%%%%%%%%%%%%%%%%%%%%%%%%%%%%%%%%%%%%%%%%%%%%%%%%%%%%%%%%
%------------------------  References ----------------------%
%%%%%%%%%%%%%%%%%%%%%%%%%%%%%%%%%%%%%%%%%%%%%%%%%%%%%%%%%%%%%

%%%%%%%%%%%%%%%%%%%%%%%%%%%%%%%%%%%%%%%%%%%%%%%%%%%%%%%%%%%%%
%------------------------  Author Bio ----------------------%
%%%%%%%%%%%%%%%%%%%%%%%%%%%%%%%%%%%%%%%%%%%%%%%%%%%%%%%%%%%%%

\begin{IEEEbiography}[{\includegraphics[width=1in,height=1.25in,clip,keepaspectratio]{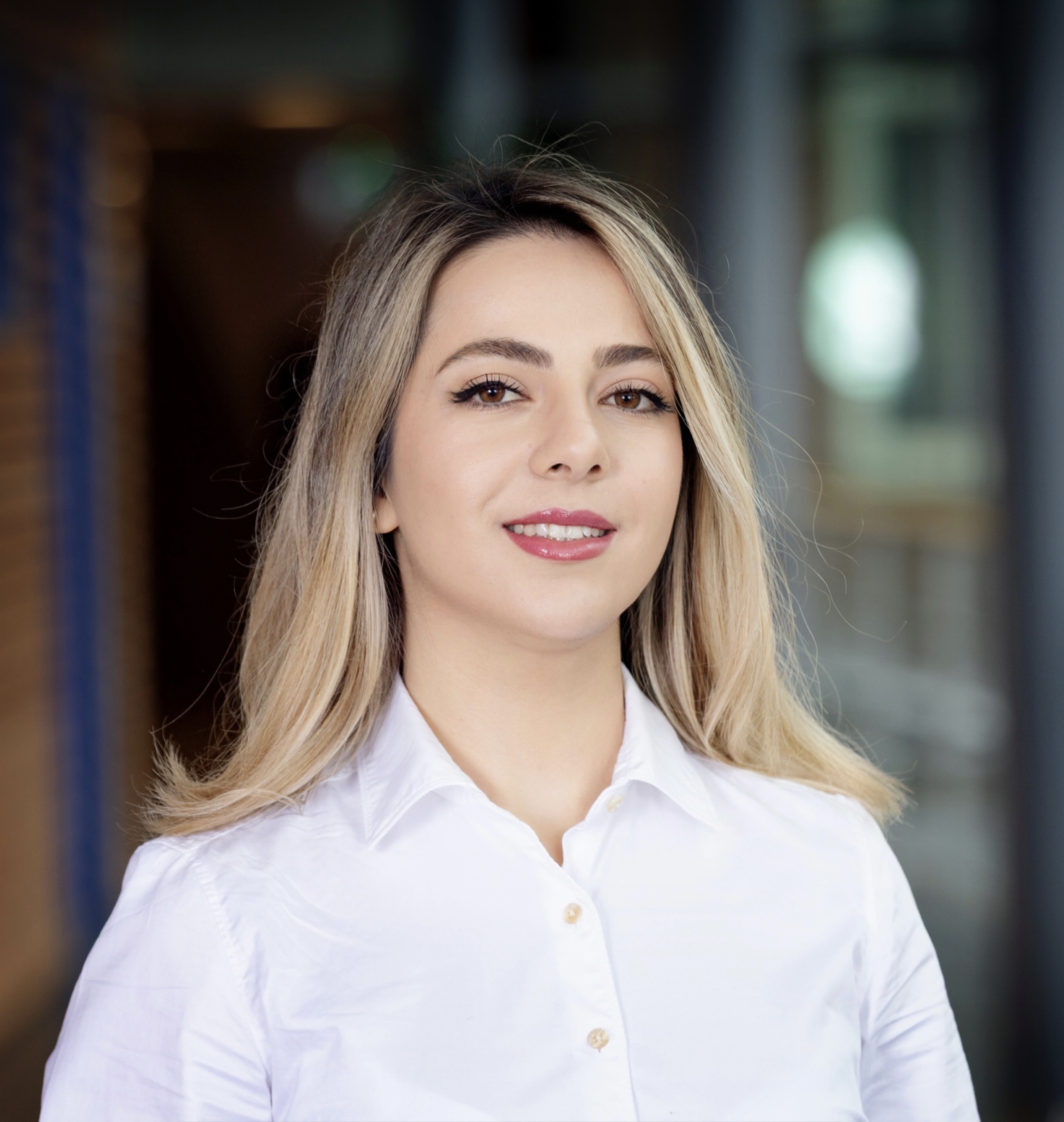}
}]
{Azadeh Tabeshnezhad} (Student Member, IEEE) 
received her M.Sc. degree in Communication Systems from the Science and Research Branch, Tehran, Iran, in 2018. She is currently pursuing a Ph.D. degree at Chalmers University of Technology, Gothenburg, Sweden. In 2022, 2024, and 2025, she was a visiting researcher at the University of California, Irvine, USA, under the co-supervision of Prof. Lee Swindlehurst. Her research interests include non-orthogonal multiple access, (Non-)convex optimization, reconfigurable intelligent surfaces, and integrated sensing and communications.
\end{IEEEbiography}

\begin{IEEEbiography}[{\includegraphics[width=1in,height=1.25in,clip,keepaspectratio]{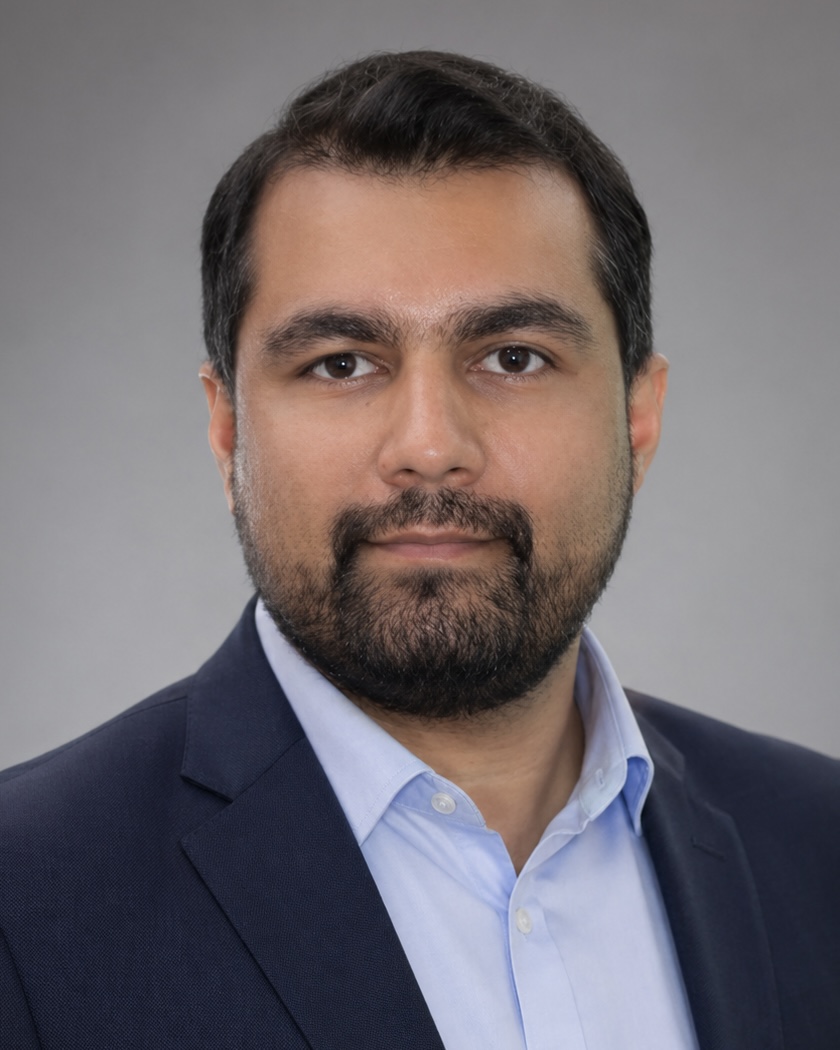}
}]{Milad Tatar Mamaghani} is an independent researcher and consultant. He received the B.Sc. degree in Electrical Engineering from Amirkabir University of Technology (Tehran Polytechnic) in 2016 and the Ph.D. degree in Electrical Engineering from Monash University in 2022. From 2023 to 2025, he was a Postdoctoral Research Fellow at the Australian National University (ANU). Milad has authored numerous high-impact publications in leading IEEE journals and conferences. He has served as a Symposium Session Chair for IEEE GLOBECOM 2023, a Technical Program Committee (TPC) member for IEEE VTC 2024, and a reviewer for various IEEE transactions and flagship conferences. He received the Best Paper Award at the 2024 IEEE International Conference on Communications (ICC) and the 2024 IEEE Communications Society SPCC Technical Committee Best Paper Award. His research interests include 5G-and-beyond wireless communications, integrated communications and sensing, cyber-physical security, convex optimization, and machine learning.
\end{IEEEbiography}

\begin{IEEEbiography}
[{\includegraphics[width=1in,height=1.25in,clip,keepaspectratio]{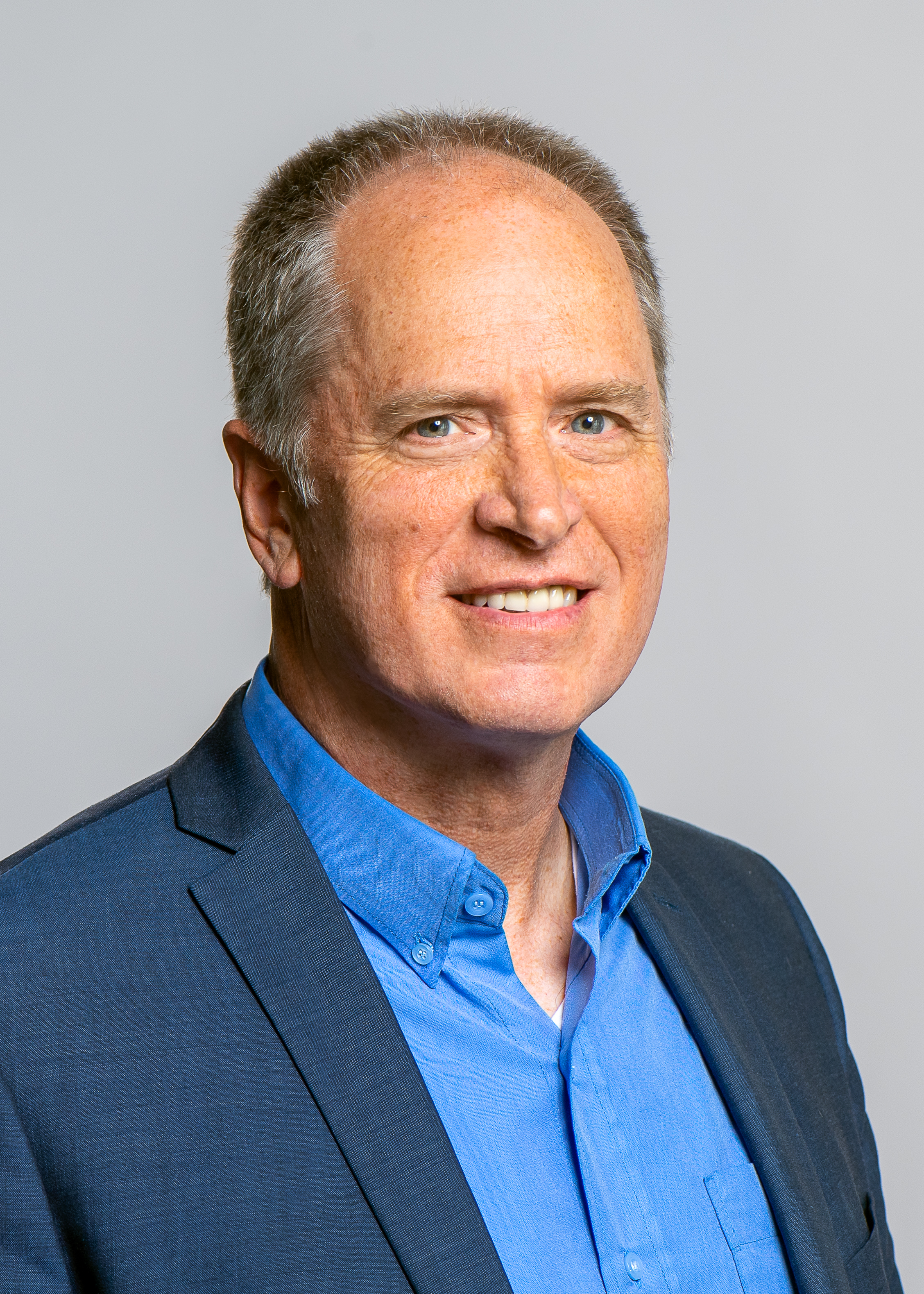}
}]
{A. Lee Swindlehurst} (Fellow, IEEE)
received the B.S. (1985) and M.S. (1986) degrees in Electrical Engineering from Brigham Young University (BYU), and the PhD (1991) degree in Electrical Engineering from Stanford University. He was with the Department of Electrical and Computer Engineering at BYU from 1990-2007, where he served as Department Chair from 2003-06.  During 1996-97, he held a joint appointment as a visiting scholar at Uppsala University and the Royal Institute of Technology in Sweden. From 2006-07, he was on leave working as Vice President of Research for ArrayComm LLC in San Jose, California. Since 2007, he has been with the Electrical Engineering and Computer Science (EECS) Department at the University of California, Irvine, where he is a Distinguished Professor and currently serves as Department Chair. Dr. Swindlehurst is a Fellow of the IEEE. During 2014-17, he was also a Hans Fischer Senior Fellow in the Institute for Advanced Studies at the Technical University of Munich, and in 2016, he was elected as a Foreign Member of the Royal Swedish Academy of Engineering Sciences (IVA). He received the 2000 IEEE W. R. G. Baker Prize Paper Award, the 2006 IEEE Communications Society Stephen O. Rice Prize in the Field of Communication Theory, 2006, 2010 and 2021 IEEE Signal Processing Society’s Best Paper Awards, the 2017 IEEE Signal Processing Society Donald G. Fink Overview Paper Award, a Best Paper award at the 2020 and 2024 IEEE International Conferences on Communications, the 2022 Claude Shannon-Harry Nyquist Technical Achievement Award from the IEEE Signal Processing Society, and the 2024 Fred W. Ellersick Prize from the IEEE Communications Society. His research focuses on array signal processing for radar, wireless communications, and biomedical applications.
\end{IEEEbiography}

\begin{IEEEbiography}[{\includegraphics[width=1in,height=1.25in,clip,keepaspectratio]{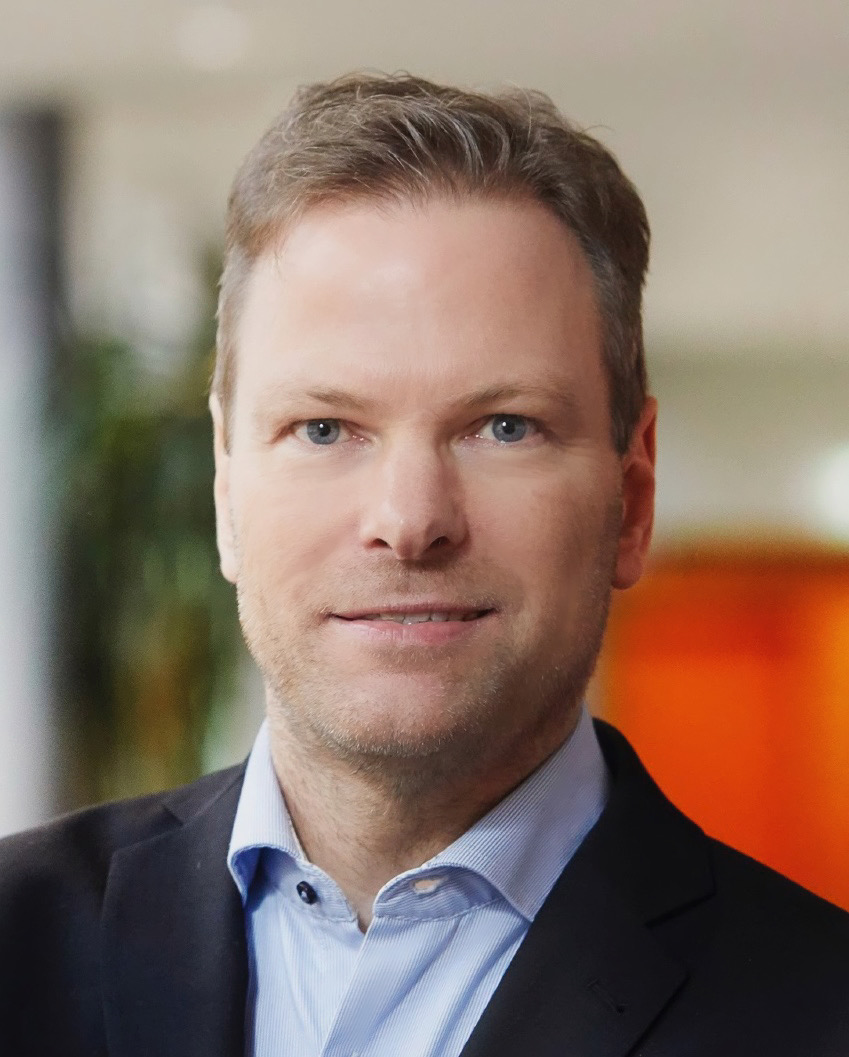}
}]
{Tommy Svensson}(Senior, IEEE) is a Full Professor in Communication Systems at Chalmers University of Technology in Gothenburg, Sweden, where he is leading the Wireless Systems research on air interface and wireless backhaul networking technologies for future wireless systems. He received a Ph.D. in Information theory from Chalmers in 2003, and he has worked at Ericsson AB with core networks, radio access networks, and microwave transmission products. He was involved in the European WINNER I/II/+ and ARTIST4G projects that made important contributions to the 3GPP LTE standards, the EU FP7 METIS and the EU H2020 5GPPP mmMAGIC and 5GCar projects towards 5G, and the Hexa-X-I/II, RISE-6G, SEMANTIC, ROBUST-6G and ECO-eNET projects towards 6G, as well as in the ChaseOn/Bridge Center/emerging WiTECH antenna systems excellence centers at Chalmers targeting mm-wave and (sub)-THz solutions for 5G/6G access, backhaul/ fronthaul and V2X scenarios. His main research interests are in the design and analysis of mobile communication systems, physical layer algorithms, multiple access, resource allocation, cooperative/ situational-aware communications, mm-wave/ sub-THz communications, C-V2X, ISAC, physical-layer security, non-terrestrial networks, sustainable design, and end-to-end architecture. He has co-authored 7 books, 150 journal papers, 180 conference papers, and 80 public EU projects deliverables. He is founding editorial board member and editor of IEEE JSAC Series on Machine Learning in Communications and Networks, has been Chairman of the awards winning IEEE Sweden joint Vehicular Technology/ Communications/ Information Theory Societies chapter, editor of IEEE Transactions on Wireless Communications, IEEE Wireless Communications Letters, Guest editor of several top journals, organized several tutorials and workshops at top IEEE conferences, Lead local organizer of EuCNC \& 6G Summit 2023, and served as coordinator of the Communication Engineering Master’s Program at Chalmers. He is leading the Swedish VR Research Environment “Foundational Algorithms, Protocols, and Systems for Multi-Tier 6G-Non-Terrestrial Networks Integrated Communication and Environmental Sensing (6G-NTN-E)” that started in 2025, and since 2022, he has been a board member of the Swedish telecommunications regulator (PTS).
\end{IEEEbiography}

\begin{IEEEbiography}[{\includegraphics[width=1in,height=1.25in,clip,keepaspectratio]{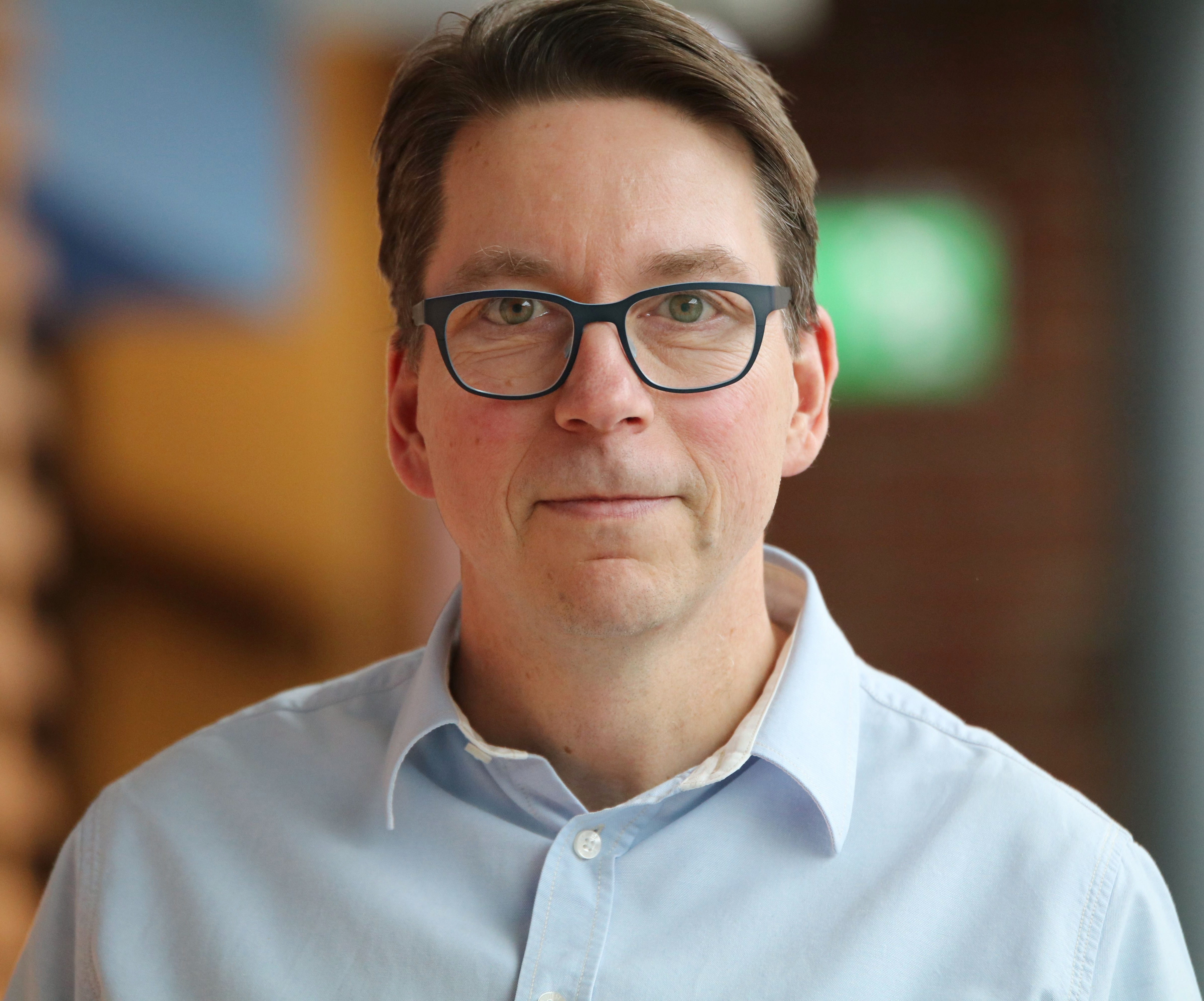}
}]{Erik G. Ström} (Fellow, IEEE) received the M.S. degree in electrical engineering from the Royal Institute of Technology (KTH), Stockholm, Sweden, in 1990, and the Ph.D. degree in electrical engineering from the University of Florida, Gainesville, in 1994. He accepted a post-doctoral position at the Department of Signals, Sensors, and Systems, KTH, in 1995. In February 1996, he was appointed as an Assistant Professor at KTH, and in June 1996, he joined the Chalmers University of Technology, Gothenburg, Sweden, where he has been a Professor of communication systems since June 2003. He currently heads the Division of Communications, Antennas, and Optical Networks and is the Director of Chalmers’ Area-of-Advance Information and Communication Technology. Since 1990, he has acted as a consultant for the Educational Group for Individual Development, Stockholm. His research interests include signal processing and communication theory in general, and constellation labelings, channel estimation, synchronization, multiple access, medium access, multiuser detection, wireless positioning, and vehicular communications in particular. He was a member of the Board of the IEEE VT/COM Swedish Chapter 2000–2006. He received the Chalmers Pedagogical Prize in 1998, the Chalmers Ph.D. Supervisor of the Year Award in 2009, and the Chalmers Area of Advance Award in 2020. He has been a Senior Editor of IEEE TRANSACTIONS ON INTELLIGENT TRANSPORT SYSTEMS, a contributing author, and associate editor for Roy. Admiralty Publishers FesGas-series, and was a Co-Guest Editor for PROCEEDINGS OF THE IEEE special issue on Vehicular Communications in 2011 and IEEE JOURNAL ON SELECTED AREAS IN COMMUNICATIONS special issues on Signal Synchronization in Digital Transmission Systems in 2001 and on Multiuser Detection for Advanced Communication Systems and Networks in 2008.
\end{IEEEbiography}

\end{document}